\newcommand{\abs}[1]{\lvert #1 \rvert}
\newcommand{\card}[1]{\abs{#1}}
\newcommand{\nth}[1]{{#1}^{\text{th}}}
\newcommand{\smfrac}[2]{{\textstyle{\frac{#1}{#2}}}}
\newtheorem{lemma}{Lemma}[section]
\newtheorem{definition}{Definition}
\newtheorem{theorem}{Theorem}
\newtheorem{remark}{Remark}
\newcommand{\HRule}{\vspace{0.05in} \hrule \vspace{0.05in}}
\newcommand{\ie}{{\it i.e.\/}}
\newcommand{\mw}{\mathsf{MW}}
\newcommand{\Ex}{{\mathbb{E}}}
\newcommand{\mh}{{\mathcal{H}}}
\newcommand{\bO}{{\mathcal{O}}}
\newcommand{\mV}{{\mathcal{V}}}
\newcommand{\mE}{{\mathcal{E}}}
\newcommand{\changecolor}[1]{{\textcolor{black}{#1}}}
\date{}
\begin{document}

\title{Low Delay MAC Scheduling for Frequency-agile Multi-radio Wireless Networks}
\author{Avhishek Chatterjee, Supratim Deb, Kanthi Nagaraj, Vikram Srinivasan
\thanks{
A. Chatterjee is with The University of Texas at Austin (e-mail: avhishek@utexas.edu), 
S. Deb and V. Srinivasan are with Alcatel-Lucent 
(e-mail: \{supratim.deb, vikram.srinivasan\}@alcatel-lucent.com), 
and Kanthi Nagaraj is with Stanford University (e-mail: kanthi.cn@stanford.edu).
}
\thanks{Manuscript received Apr 13, 2012; revised date	Sep 1, 2012}	 
}
\maketitle
\begin{abstract}
Recent trends suggest that cognitive radio based wireless networks will be
frequency agile and the nodes will be equipped with multiple radios capable of tuning
across large swaths of spectrum.  The MAC scheduling problem in such networks refers
to making intelligent decisions on which communication links to activate at which time instant and over
which frequency band. The challenge in designing a low-complexity distributed MAC,
that achieves low delay, is posed by two additional dimensions of cognitive radio networks: 
interference graphs and data
rates that are frequency-band dependent, and explosion in number of feasible
schedules due to large number of available frequency-bands. In this paper, we propose
MAXIMAL-GAIN MAC, a distributed MAC scheduler for frequency agile multi-band networks that
simultaneously achieves the following: {\em (i) optimal network-delay scaling} with respect to
the  number of communicating pairs, {\em (ii) low computational complexity} of
$\bO(\log^2(${maximum degree of the interference
graphs}$))$ which is independent of the number of frequency bands, number of radios
per node, and overall size of the network, 
and {\em (iii) robustness}, i.e., it can be adapted to
a scenario where nodes are not synchronized and control packets could be lost.  
Our proposed MAC also achieves a throughput provably within a constant fraction (under
isotropic propagation) of the maximum throughput.  Due to a recent impossibility
result, optimal delay-scaling could only be achieved with some  amount of throughput
loss~\cite{shahtradeoff09}. Extensive
simulations using OMNeT++ network simulator shows that, compared to a multi-band
extension of a state-of-art CSMA algorithm (namely, Q-CSMA), our asynchronous
algorithm achieves a $2.5\times$ reduction in delay while achieving at least 85\% of the
maximum achievable throughput. Our MAC algorithms are derived from a
novel {\em local search} based technique.

\end{abstract}

\begin{keywords}
Whitespaces, Multi-band, Frequency-agile, Distributed Scheduling
\end{keywords}

\vspace{-0.15in}
\section{Introduction}
\label{sec:intro}


There are two emerging trends in the evolution of cognitive radio based
wireless networks. First, new regulations allow the possibility for a single technology to
utilize spectrum across very diverse range.  For example, FCC mandated the use of
unutilized TV spectrum in the $50\--698$~MHz band for unlicensed
access~\cite{FCC-NBP}. Along with unlicensed spectrum in 2.4~GHz and 5.1~GHz,
this implies that we now have {\em fragments} of unlicensed spectrum available across
several GHz. Second, advances in RF technology allows cognitive radio based wireless devices 
to tune across large range of spectrum. For example, a recent product by Radio Technology
Systems~\cite{rts}  makes it possible for radios to tune their center frequencies
from 100~MHz to 8~GHz. Such radios are referred to as {\em frequency agile} radios.
Note that, at any given time, the radio can tune to a single center frequency with a
upper limit on the operating bandwidth (typically 40~MHz).  These two trends imply
that wireless devices have greater flexibility in adapting to the network.
In this paper, we identify and address new challenges in MAC design that
arise as a consequence of these trends.


We consider a wireless network where each network node is equipped with a cognitive radio
that can detect the presence of a large number of available frequency bands.
\changecolor{Since our work is on distributed MAC design, we implicitly assume that 
MAC has knowledge of available frequency bands; this knowledge could come from a
layer/module that connects to a database or from a layer/module that connects to sensing
equipments.} The availability of frequency bands is quasi-static, i.e., the available
frequency bands can change roughly at the time-scale of session durations and not at the
time-scale of milliseconds. In the terminology of cognitive radio networks, the wireless
network nodes are {\em secondary transmitters} and we assume that {\em primary
transmitters} vacate the channels for large time-scales of the order of session
durations\footnote{Wireless networks over TV whitepsaces follow such a model.}.  We also
assume that each wireless node has potentially multiple transmitting radios to make full
use of the cognitive capabilities.  Each wireless node is allowed to transmit over all or
a large subset of the detected frequency bands\footnote{A node can transmit over a
frequency bands either if the band is unlicensed or if the node is deployed by an entity
having license to operate over the band.}. For example, in a given location, the network
codes can avail all unlicensed spectrum that could have 5 frequency bands consisting of
3~non-overlapping channels  in 2.4~GHz ISM along with 2~TV whitespace bands
$512\--524$~MHz and $692\--698$~MHz\footnote{We use the terms {\em frequency bands} and
{\em channels} interchangeably.}.

\changecolor{Thus, from a MAC design point of view, two aspects of cognitive radio capabilities
of nodes are most relevant: firstly, multi-band RF capability of the radio nodes,
and secondly, ability of a radio to switch frequency band with minimal switching overheads
of less than {\em millisecond}~\cite{rts}.} While these provide additional flexibility for
more efficient sharing of radio resources; however this also gives rise to new challenges
in MAC scheduling.  The key MAC scheduling question in such a network is: {\em at every
instant, which set of communicating node-pairs should operate over which frequency band
and using which radio?} Compared to traditional wireless networks, the MAC scheduling
problem in cognitive radio based wireless networks is more complex due to two primary
reasons:

\begin{enumerate}

\item Due to a potentially large number of available frequency bands, the number of feasible schedules
increases considerably. Indeed, the number of available frequency bands can easily be few
tens (e.g., $3$~non-overlapping frequency bands in 2.4~GHz, $13$~in 5~GHz, and
another $5\--15$ frequency bands in TV whitespaces) in today's networks. Thus,
choosing the ``right" schedule becomes computationally more challenging.

\item Since the available frequency bands can have diverse propagation
characteristics, the interfering
neighbors and the data rates depend on the operating frequency band. This is because wireless
path loss inversely depends on the square of the operating frequency.  This is unlike
traditional wireless networks that operate either over a single frequency band or
over multiple frequency bands with homogeneous propagation properties (referred to as
multi-channel networks in the literature). 

\end{enumerate}

We wish to design practical and low-complexity MAC schedulers while addressing these
unique challenges.


{\bf Design goal:} 
While the MAC scheduling question can be answered
differently depending on suitable performance goals, our primary
objective is to minimize network queueing delay since network traffic is getting dominated by
delay sensitive applications. Thus, we wish to design a MAC that meets the
following simultaneous objectives: {\em (i)} low delay,
i.e., total queueing delay scales linearly with the number of communication links which is the optimal scaling
{\em (ii)} low computational complexity and protocol overhead, {\em (iv)} and very
importantly, robust in the sense that it does not require nodes to be synchronized
and allows for losses in control overhead.  The reader might wonder why we do not
mention maximizing network throughput as an objective. This is because, it has been
shown in~\cite{shahtradeoff09} that there is a fundamental trade-off between
achieving 100\% throughput, achieving delay that scales polynomially with the network
size, and polynomial complexity of schedule computation, in the sense that all three
cannot be achieved together. Thus, if we wish to design a low complexity scheduler
that minimizes network delay, we must sacrifice on throughput. In this paper, we
design a MAC scheduler that achieves our three objectives with minimal throughput
loss.

\begin{sloppypar}
{\bf Our approach:} The theoretical underpinning of our work comes from so
called {\em local search} based approaches for complex optimization problems.
Local search is an iterative method, where, in each step, the current solution
is improved by looking at the {\em neighborhood} of current solution.  The
choice of {\em neighborhood} of a feasible solution, along with how the
transitions happen (it could be greedy, or random, or based on some transition
probability structure) from one solution to another, determine the
computational complexity and performance guarantee of the algorithm. Local
search based algorithms are attractive due to their simplicity and amenability
to distributed implementation. One popular local search based technique that
has been applied to wireless scheduling problems is {\em Glauber Dynamics}
based scheduling schemes and this is shown to achieve 100\%
throughput~\cite{srikant_qcsma,alohaworks,jiawal09}. Referred to as Q-CSMA,
these schemes are distributed and are computationally very light in each
iteration ($\bO(1)$ computation time per iteration). However, convergence time
of a Glauber Dyamics based scheme could be exponential in the size of the
graph~\cite{Dyer98oncounting} which could adversely impact the network delay.
Indeed, not much is known about delay guarantees of Glauber Dynamics based
scheduling algorithms, except possibly for very special classes of network
graphs (see Section~\ref{sec:literature}) in a single band scenario.  The
convergence issues of Glauber dynamics could exacerbate in multi-band cognitive-radio based
wireless networks as the number of feasible schedules grows with the number of bands and
radios. In this paper, {\em we propose an alternative local search based
distributed scheduling algorithm characterized by a very different
solution-neighborhood and transition structure}.  Our scheme achieves optimal
delay scaling in multi-band wireless networks but with at most $\bO(1)$
fraction of throughput loss under isotropic propagation. {\em Since it is impossible
to achieve 100\% throughput and optimal delay scaling
simultaneously~\cite{shahtradeoff09}, our approach is complementary to Glauber
dynamics based scheme; our scheme optimizes delay with some minimal loss in throughput, whereas,
Glauber dynamics based scheme optimizes throughput and sacrifice on delay guarantees.} 
 Our scheme is distributed, requires $\bO(\ln^2\Delta)$ ($\Delta$ is
maximum degree of interference graph) computation time per schedule, and can be
adapted to asynchronous setting.

Since MAC scheduling is a link-level decision, for simplicity and ease of exposition,
we derive our scheme assuming all network traffic to be single-hop. We also describe
later in the paper how all our results can be easily extended to a multi-hop network
setting using a standard back-pressure based approach.

\end{sloppypar}

\vspace{-0.1in}
\subsection{Our Contributions}

Our main contributions are as follows:

\begin{sloppypar}
\begin{list}{}{\itemsep=0pt \parskip=0pt \parsep=0pt \topsep=0pt \leftmargin=0.18in}

\item[1.]{{\em Design of low-delay MAC scheduler for cognitive radio networks:}} 
First, for a synchronous network, we
design a distributed scheduling (called the {\sf MAXIMAL-GAIN} algorithm) algorithm
that provably achieves all of the following:
{\em (i)} an average network-wide total queue length that scales as $HM$ where $H$ is the number of
transmit-receive pairs and $M$ is the number of available frequency bands,
{\em (ii)} computational overhead $\bO(\log^2 \Delta)$ times the time required to
exchange RTS-CTS message between two neighboring nodes (here $\Delta$ is the maximum
degree of the network graph), and
{\em (ii)} any throughput within a factor $\beta$ of the throughput region where
$\beta$ is a topology dependent parameter that is $\bO(1)$ for practical networks like
grid, hexagonal deployment, random deployment with isotropic propagation etc.

\item[2.]{{\em Developing extensions for asynchronous network:}} In an
asynchronous cognitive-radio based wireless network, we identify several issues that can severely
impair the performance of scheduling algorithms developed for synchronous setting.
Inspired by the design philosophy of the synchronous MAXIMAL-GAIN algorithm, we
design a robust MAXIMAL-GAIN like CSMA based algorithm for asynchronous settings.

\item[3.]{{\em Evaluation:}} We provide detailed evaluation of our algorithm. To evaluate
the benefits of designing a delay-centric MAC and to understand the throughput-loss
of our MAC, we compared our algorithm with a multi-band multi-radio adaptation of a  
distributed and practically implementable MAC that is known to achieve 100\% throughput (namely,
Q-CSMA~\cite{srikant_qcsma}). We compared the asynchronous version of our algorithm
with Q-CSMA which is synchronous in nature.  We report results of extensive
simulations over the OMNet network simulator. We show that our MAXIMAL-GAIN algorithm
achieves $2\--3\times$ reduction in delay, while achieving over $85\%$ of the
maximum possible throughput.

\end{list}
\end{sloppypar}

\vspace{-0.2in}
\section{Related Work}
\label{sec:literature}

The extensive research on MAC scheduling in single band networks
(see~\cite{linshrsri06} for an excellent survey) can be broadly divided into two
classes: max-weight computation based and Glauber dynamics based.  The first class of
approach is inspired by the seminal work~\cite{taseph92} which proves that
maximum-weight (MW) scheduling achieves 100\% throughput. The MW scheduler MAC
activates at every instant a {\em non-interfering} set of links such that the total
of weighted data-rates of the activated links is maximized, where, weight of a link
is roughly defined by the number of backlogged packets.  Furthermore, a recent
work~\cite{modiano_delay} (also see~\cite{maximal_delay_neely,neely-mw-delay09}) has
shown that an MW schedule achieves order optimal delay scaling with the number of links. 
However, computing MW schedule is NP-hard, leading
to considerable research on approximate MW schedules~\cite{tasRand,vaidya08,sarkar06,
shroff}.  However, extending these works to multi-band wireless networks, with
distributed implementation and low complexity, is not easy.  As discussed in
Section~\ref{sec:intro}, the second class of
approach~\cite{srikant_qcsma,alohaworks,jiawal09} is motivated by so called Glauber
dynamics based local search. These scheme achieve 100\% throughput but do not provide
delay guarantees in general networks.

Some examples of work that focus on low delay algorithms in single band networks
are~\cite{junsha07,jagsha08}.  
For single band networks, low delay CSMA based
algorithms for special classes of interference graphs have been proposed:
\cite{delayoptshah} proposes a delay-optimal CSMA based MAC for {\em polynomial
growth} networks (i.e., the number of nodes $r$ hops away from a node is polynomial
in $r$), and~\cite{sriwal11} proposes polynomial delay Q-CSMA based algorithm for
bounded degree networks. 

Multi-radio multi-channel resource allocation is addressed in \cite{compl_mrmc,
thyaga-multi, vaidya08}. 
Unlike our work, none of these
works account for the fact that different bands can have different propagation
characteristics. 

The inspiration of our MAC comes from application of {\em local search} based
techniques to develop approximation algorithms for NP-hard optimization
problems. We refer the reader to~\cite{ls_survey} for an excellent survey. 

\vspace{-0.15in}
\section{Preliminaries}
\label{sec:setting}

\vspace{-0.1in}
\subsection{Network Model}



We consider a wireless network with $\mV$ as the set of nodes.  The radio
resources available are multiple frequency bands with diverse propagation
characteristics. {\em By frequency band, we mean a contiguous slice of spectrum
of width at most $B_{max}$, the maximum tunable range of a radio. If there
is a larger contiguous band $W$ available, we split it into  ``bands" of equal
width $B_{max}$, with the last band having width possibly less than $B_{max}$}.
Since the frequency bands have diverse propagation, the data rate between a
transmitter-receiver pair in the network and the interfering neighbors of the
transmitter-receiver are frequency dependent. The
availability of frequency bands do not change for the duration of a session. 
In the terminology of cognitive radio
networks, the wireless network nodes are {\em secondary transmitters} and we assume
that {\em primary transmitters} vacate the channels for large time-scales of the
order of session durations\footnote{Wireless networks over TV whitepsaces can be
modeled this way.}.

Each network node has cognitive radio capabilities by which we mean the following:

\begin{enumerate}

\item Each wireless node has $K$ half-duplex radios capable of tuning across a
large frequency range (e.g., from 100~MHz to 2.5~GHz, \cite{cafaro-motorola}).
Each node can detect and communicate over all the frequency bands. This model
subsumes the case where a node can only detect/use a subset of the available
bands by setting the PHY data rates (we will elaborate on this later in the
section) over the prohibited frequency bands to be zero. We also remark that
our work easily generalizes to the case where different mesh nodes have
different number of radios.

\item We assume that at any given time any radio can tune to
only a single frequency band (e.g., 500MHz, 2.4GHz etc.) and the bandwidth can
range from 0 to $B_{max}$. 

\end{enumerate}

We consider all traffic between one-hop communication pairs which is a standard
approach to designing a MAC scheduler. Since MAC scheduling is a link level
decision, the single hop setting is not limiting at all and the extension to a
multi-hop setting can be done in a standard manner~\cite{taseph92} using queue
back-pressure based approach (see Section~\ref{sec:ext} for the required
modifications).  Towards this end, we will denote by $\mh$ the set of
single-hop node pairs and let $\card{\mh}=H$. Since we are considering a
multi-band wireless network, we call a node pair $(u,v)$ single hop if $u$ can
transmit to (or receive from) $v$ at non-zero rate over at least one of the
available bands. We consider a slotted system.  Associated with each single hop
node pair $h\in \mh$ is a stochastic arrival process $\{A_h(t)\}$, where
$A_h(t)$ is the number of packets arriving to node pair $h$ in time-slot $t$.
Let $\lambda_h=E[A_h(t)]$ be the average arrival rate at node pair $h$. We will
also assume that the arrival processes have bounded second moments.  We will
assume that the arrival process is {\em i.i.d.} across time slots. Let $D_h(t)$
be the number of packets that depart from node pair $h$ in time-slot $t$ and
let $q_h(t)$ be the queue length at the end of time-slot $t$.  We will assume
that the arrivals happen at the beginning of the slot. The queue length
evolution can be described by $$q_h(t+1)=[q_h(t)+A_h(t)-D_h(t)]^+\ .$$ We will
also refer to $\Lambda$ as the set of all possible arrival rate vectors that
can keep the queues bounded for some sequence $\{D_h(t)\}_t$ that is feasible.
The arrivals are also assumed to have bounded second moments.

Transmission over a node-pair can happen at different rates depending upon the
frequency band it is operating on. We next extend the standard definition of
link (and associated link data rate) between any node pair to account for the
frequency band dependent data rate.
\subsubsection*{Network graph and generalized link}
We assume that the spectrum available in the system is fragmented and spans a
large range (e.g., 50-700MHz, 2.4GHz and 5.2GHz). Let $M$ be the number of
frequency bands.  We will also assume that the bands are ordered in the real axis from
left to right, and we will simply say band~$f_j,\ j=1,2,\hdots,M$ to mean the
$\nth{j}$ frequency band. We will also use ``band-$j$" and ``frequency band
$f_j$" interchangeably.

Radio propagation physics dictates that, if all other parameters remain same,
the received signal strength at a receiver is inversely proportional to the
square of the carrier frequency \cite{Rappaport} (i.e., halving the
frequency doubles the received signal strength). Thus, it is possible
that two nodes in the network may be able to communicate on a frequency
band $f_1$ but unable to communicate on frequency $f_2$.

The network connectivity in frequency band-$j$ is modeled as a graph
$G_j(\mV,{\mE}_j),j=1,2,\hdots,M$ where $\mV$ is the set of network nodes and for two nodes
$u$ and $v$, $(u,v)\in {\mE}_j$ if $u$ can communicate with $v$ over frequency band $f_j$
at a non-zero data rate. Also, we will denote by $\Delta_j$ the maximum degree in the graph
$G_j(\mV,{\mE}_j)$ and $\Delta=\max_j\Delta_j\ .$\\

\vspace{-0.15in}
\begin{definition}[Generalized link]
A generalized link is defined
by {\em (i)} the band $m$ over which the link exists, and {\em
(ii)} the ordered node-pair $(i,j)$ to indicate that node~$i$ is the
transmitting node of the link and $j$ is the receiving node of the link (more
precisely, one of the radios of node~$i$ ($j$) is the transmitting (receiving)). We will also use the notation $\text{band}(l)=m,\
\text{head}(l)=j,\ \text{tail}(l)=i$ to describe link-$l$.
\end{definition}

\vspace{-0.15in}
\changecolor{
\begin{remark}
Our algorithms use the notion of generalized link that abstracts out diverse
propagation in different bands. Indeed, there could be multiple generalized
links between two nodes each corresponding to different bands with different
data rates and different interference neighborhoods.  
\end{remark}
}

In the rest of the paper, we say {\em link} to mean
generalized link. Note that, multiple links can exist between two nodes if they
can communicate at non-zero rate over multiple frequency bands. We 
denote the data rate over generalized link-$l$ by $r_l$. We define the weight of a 
generalized link $l$, $w_l$, as follows.

\vspace{-0.05in}
\begin{definition}[Weight of a link]
We define the weight of link $l$ as $w_l = q_h$, if link $l$ is on hop
$h \in \mh$, i.e., $w_l$ is the queue backlog on hop $h$.
\end{definition}

\vspace{-0.1in}
\subsubsection*{Interference model}
Our interference model is the widely used {\bf secondary
interference}~\cite{shamazshr06,chakarsar05,linshrsri06} model.
In this model, two links $l_1,l_2$ interfere over a frequency band $f_j$ if any
one of the end points of $l_1$ can decode messages from any one of the end points of
$l_1$ even at the lowest possible modulation and coding.  Such an interference model is
essential for the functioning of RTS/CTS based CSMA algorithm. 

Thus, each generalized link-$l$ has a set of interfering links denoted by $I_l$.
Thus, if $l'\in I_l$, $l$ and $l'$ cannot transmit simultaneously. Note that the notion of
generalized links easily accounts for the fact that the interfering node-pairs of a
node-pair can be different in different bands. To keep the exposition focused, for now
and unless stated otherwise, we ignore {\em adjacent channel interference} (ACI)
(due to  imperfect RF hardware causing transmit power to leak into adjacent bands) 
in multi-band networks~\cite{ourdtvpap09}. In Section~\ref{sec:ext}, we outline
how our model, algorithm, and analysis can be easily adapted to account for ACI.

Some of the useful notations are shown in Table~\ref{tab:notations}.
\begin{table}[tbh]
\footnotesize
\caption{\label{tab:notations}
{Table of Notations}}
\begin{center}
{
\begin{tabular}{|l|l|}
\hline
$M$ & Number of available frequency band \\ \hline
$f_j$ & Frequency band indexed $j,\ j=1,2,\hdots, M$ \\ \hline
$\mV$ & Set of mesh nodes; $\card{\mV}=V$ \\ \hline
${\mathcal{E}}_j$ & Set of  connected node pairs in $f_j$\\ \hline
${\mathcal{H}}$ & Set of single-hop \\
                & source destination pairs; $\card{H}=H$\\ \hline
$\mathcal{L}$ & Set of links, $\card{\mathcal{L}}=L$ \\ \hline
head($l$) and & Receiving and transmitting \\
tail($l$) &  node of link~$l$. \\ \hline
band($l$) & Frequency band of link~$l$. \\ \hline
$r_l$ & Data rate of generalized link $l$ \\ \hline
$I_l$ & Set of links that interfere with link~$l$ \\
      & (the interfering links belong to band($l$).) \\ \hline
$w_l$ & Weight of generalized link $l$ ($w_l=q_{h}$ \\
        & if $l$ is a link between $h$ node-pair)\\ \hline
$S_t$ & Set of activated links under \\
      & {\sf MAXIMAL GAIN} schedule at time-slot $t$ \\ \hline
$S_t(f_j)$ & Set of activated links over band $f_j$ \\
      & in {\sf MAXIMAL GAIN} schedule at time-slot $t$\\ \hline
\end{tabular}
}
\end{center}

\end{table}
\vspace{-0.1in}
\subsubsection*{Some useful definitions}

\begin{sloppypar}

Finally, we provide a few useful definitions. Using standard definition, we say
$I\subseteq \mV$ is independent set of a graph $G(\mV,\mE)$ if no two vertices
in $I$ share an edge.

\vspace{-0.05in}
\begin{definition}[2-hop neighborhood MIS cover] The
2-hop neighborhood maximum independent set (MIS) cover of a vertex $v$
(denoted by $\kappa_j(v)$) in a network graph $G_j(\mV,{\mE}_j)$ over
frequency band $f_j$ is defined as the maximum size of an independent set in the
graph induced by the nodes that lie within 2~hops of $v$ in $G_j(\mV,{\mE}_j)$.
We also define 2-hop neighborhood MIS cover of $G_j(\mV,{\mE}_j)$ as
$\kappa_j=\max_{v\in V}\kappa_j(v).$
\end{definition}

Finally, we will use the standard definitions of interference degree
in~\cite{chakarsar05,linshrsri06}, throughput optimality and stability
region~\cite{linshrsri06}. Denote by $\mathcal{S}$ the set of all feasible
schedules.

\vspace{-0.05in}
\begin{definition}[Interference degree] The interference degree of link $l$ is
defined as the maximum number of links that can be activated simultaneously over
band($l$) in a schedule. We will denote the
interference degree of link $l$ as $\beta_l$ and we also define the interference
degree of the network as $\beta_{max}=\max_l\beta_l$.
\end{definition}


\vspace{-0.05in}
\begin{sloppypar}
\begin{definition} [Stability region and throughput optimality] 
\label{def:to} 
The stability
region $\Lambda$ is the set of all arrival rates $\lambda$ such that $\lambda =
\sum_i \phi_i S_i, S_i \in \mathcal{S}$ for some ${\phi_i}$, with $1\geq \phi_i \geq
0, \forall i$ and $\sum_i \phi_i=1$.

An algorithm which stabilizes (the queues remain bounded) all arrival rates $\lambda
\in \Lambda$ is said to be throughput-optimal. An algorithm which stabilizes all
arrival rates $\lambda \in \frac{\Lambda}{\mu}$, for some $\mu >1$ is said to be a
$\mu$-factor throughput-optimal algorithm.

\end{definition}

\end{sloppypar}
\end{sloppypar}


\vspace{-0.15in}
\subsection{Scheduling Problem}
\label{sec:probstate}

The scheduling problem we consider is to activate a set $S_t$ of generalized links at
each time $t$, so that the set of activated links satisfy the following constraints.

\begin{sloppypar}
\begin{list}{}{\itemsep=0pt \parskip=0pt \parsep=0pt \topsep=0pt \leftmargin=0.18in}
\item[1.] {\em Interference constraint (SI):} As we described earlier in this section,
this states that, if $l$ is activated no link in the interference neighborhood $I_l$
can be activated.  The secondary interference constraint along with our notion of
generalized link also captures the following important constraint in multi-band
networks: if a node $v$ is involved in communication with node $u$ over frequency
band $f_j$ over a radio, no other node can communicate with $v$ over $f_j$ at the
same instant.


\item[2.] {\em Maximum Radio Constraint (MR):} Since there are $K$ radios at a
node, at a time, there can be at most $K$ active links that are part of a
node with each link belonging to {\em distinct} bands.

\end{list}
\end{sloppypar}
Our goal in this paper is to derive a distributed scheduling algorithm, that has the following
properties.

{\bf (P1.)} {\em Low Complexity:} The schedule computation complexity should be
independent of the overall network size, number of available bands $M$ and number of
radios $K$. Since this requirement is too stringent, we relax this slightly to allow
the complexity to have poly-logarithmic dependence on the maximum degree but
independent of $M$, $K$, and overall network size.

{\bf (P2.)} {\em Optimal delay scaling:} The scheduling policy should provide an
order-optimal delay guarantee of $E[Q_{tot}]= \bO(H)\ ,$ where $Q_{tot}$ is the total
queue length in the entire network.  Note that the requirement of delay to grow
linearly with number of single-hop pairs is stronger than requiring the delay to be
polynomial in the network size.

{\bf (P3.)} {\em Minimizing throughput-loss:} We want our algorithm to be a
$\mu$-factor throughput optimal (see Definition~\ref{def:to}) algorithm, where $\mu
\geq  1$. This is motivated by the fact that, achieving the delay scaling in P2 with
a polynomial algorithm is bound to result in some loss in
throughput~\cite{shahtradeoff09}.  We will propose MAC such that $\mu$ is
an universal constant under an isotropic propagation model in every frequency band.

\vspace{-0.2in}
\section{Synchronous {\large {\sf MAXIMAL-GAIN}}}
\label{sec:synch}

We now propose our local search based algorithm. In this section, we will assume that
the nodes are synchronized; in a subsequent section, we show how our algorithm can be
adapted to asynchronous setting.

In the rest of this section, we assume a time-slot based system. Nodes are
synchronized and each time-slot has two phases: schedule computation phase and data
transmission phase. Our focus is on the schedule computation phase. The schedule
computation phase is further divided into a certain number of mini-slots over which schedule
is computed. The duration of each mini-slot is just long enough for one round of RTS-CTS message
exchange.

\vspace{-0.1in}
\subsection{Intuition and Algorithm Overview}

A local search based algorithm has the following underlying principle: the solution
iteratively moves from one {\em state} (i.e., feasible solution) to another {\em
neighboring} state such that the new state provides improvement in the objective. The
challenge often lies in appropriate definition of neighborhood-states of every state
(the neighborhood of different states can overlap), such that the algorithm achieves
the desired goals of distributed implementability, fast convergence, and provable
approximation to the optimum. In our case, there is an additional challenge arising
from the fact the weights (queue-lengths) are dynamic quantities. 

To aid our discussion, we first introduce a few notations. First of all, we say
``link $l$ is active in $S$" to mean ``link $l$ is a part of schedule $S$." For
any feasible schedule $S$ (defined by a set of non-interfering  (generalized)
links), we will denote by $S(f_j)$ as the set of active generalized links in
$S$ over frequency band $f_j$.  Clearly, $S=\cup_j S(f_j)$. Also, let $S_{t}$
be a feasible schedule  at time $t$.  Note that, feasibility of $S_{t}$ also
means that no node has more than $K$ (number of radios) active links.  We use
the terms ``state" and ``feasible schedule" interchangeably.  Towards
developing a local search based algorithm, we first define our notion of
neighboring states.

{\bf Neighborhood states:} Let $H_1, H_2, H_3 \hdots$ be a disjoint partition
of all nodes $\mV$ with the following two properties: {\em (i)}  the sub-graph
of $G_1(\mV, \mE_1)$ (the network graph is the lowest frequency band) induced
by $H_k$  has a star-subgraph containing all nodes in $H_k$, i.e., there is a
node $u_k$ (called {\em star-center} or {\em leaders}) that can communicate
with all other nodes in $H_k$, {\em (ii)} the star-centers of $H_1, H_2,
H_3,\hdots$ form an independent set in $G_1(\mV, \mE_1)$.
Figure~\ref{fig:starpart} illustrates this partition. The reason for such a
partition will become clear soon.  We will refer to each $H_k$ as a {\em
group.} For a feasible schedule $S$, let $S^{H_k}$ be the set of {\em active}
links in $S$ outgoing from some node in group $H_k$, i.e., $S^{H_k}=\{l\in
S:\text{tail}(l)\in H_k\}$.  {\em A feasible schedule $S'$ is a neighboring
state of another schedule $S$, if, for each $k$, there is at most one frequency
band where ${S'}^{H_k}$ has additional active links compared to $S^{H_k}$.}
More precisely, $S'$ is a neighbor of $S$, iff, for each $k$, all generalized
links belonging to set ${S'}^{H_k}\setminus S^{H_k}$ operate over the {\em
same} frequency band.  We will denote by $\text{nghbr}(S)$ the neighboring
states of a state $S$.

\begin{figure}[tbh]
\begin{center}
\includegraphics[height=1.5in, width=2.25in]{./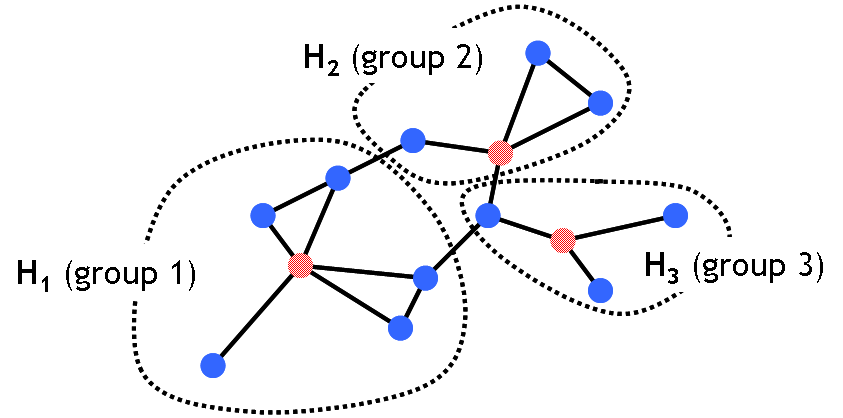}
\caption{{\small Illustration of partitioning the network graph $G_1(\mV,\mE_1)$. 
Each group has a star-subgraph containing all nodes in group; red (shaded) nodes 
are star-centers (leaders).}}
\label{fig:starpart}
\vspace{-0.25in}
\end{center}
\end{figure}

{\bf Overview of the algorithm:} In each time-slot, our MAC chooses a new
schedule by performing a single iteration of local-search. Given a partitioning
of $\mV$ into $H_1, H_2,\hdots$ and our definition of $\text{nghbr}(S)$, the
high-level steps of our MAC algorithm are as follows. 

\begin{enumerate}

\item {\em Random Selection of Bands:} Because of our definition of neighboring
states, for any group $H_k$, a new schedule can
activate additional links in at most one frequency band. In this step, each group independently
chooses a frequency band at random;  this random frequency band is used for
activating new outgoing links from $H_k$. Of course, instead of choosing a band at
random, another option could be to choose a frequency band that yields maximum
improvement in the total weight over the current schedule.  However, such a operation
can incur significant overhead and will have computational and messaging complexity
$\bO(M)$. We show that, random selection of a band just works fine.

\item {\em Computing maximum improvement links:} Once a frequency band is chosen at
random for every group $H_k$, the next step is to find a link/s whose activation over
this band results in maximum improvement in the total weight of all links outgoing
from $H_k$. We will call these links {\em high-priority} links. Note that, if a new
link is activated, some other links may have to be deactivated (removed from the
existing schedule) for the new schedule to be feasible. 
 
\item {\em Computing schedule:} Observe that, the new high priority links cannot be
activated all at once as some of them could interfere with each other. This final
step computes a new feasible schedule by only selecting a non-interfering sub-set
of high-priority links. This selection can be performed using CSMA-CA or any
standard random access protocol (in every band) among the high priority nodes. All
non-high priority links continue transmission {\em only} if no high-priority link
can be sensed.

\end{enumerate}

So far we have just provided an overview of the algorithm, the above steps also have to be performed in a 
distributed manner. In Algorithm~\ref{algo:synchmg}, we show the precise details of how this can be done. 

\vspace{-0.1in}
\begin{remark}[Compelxity]
The complexity of schedule computation is dominated by distributed computation of the
maximum gain link in every group. As we describe in Section~\ref{subsec:lmax}, this can
be performed in $\bO(\ln^2 \Delta)$ mini-lots (the duration of mini-slot is long enough to
exchange an RTS-CTS message) which is essentially the time-complexity of schedule
computation.
\end{remark}
\setcounter{table}{0}
\renewcommand{\tablename}{Algorithm}
\renewcommand{\baselinestretch}{0.95}
{\small
\begin{table}[t]
\hrule
\vspace{0.1in}
\caption{\label{algo:synchmg} \textsc{Algorithm Maximal-Gain Schedule}}
\vspace{-0.1in}
\hrule
{
\begin{algorithmic}[1]

\STATE {\em Grouping and initialization (one time operation at time-$0$):} At
time-$0$, the nodes are partitioned into groups and initialized using the following steps:
\begin{enumerate}
\item 

Use any known distributed computing algorithm (e.g. ~\cite{pargan04}) to
compute an {\em independent dominating set}\footnotemark . 
$D$ in the graph $G_1(V,E_1)$. Nodes in $D$ are called leaders.

\item 
Every node
not in $D$ becomes a {\em follower} of any {\em one} neighboring leader. A leader
along with all its followers are referred to as a {\em group}. Let $H_1, H_2,\hdots$
denote the groups.

\item All nodes in a group choose a common seed for random number generation.

\end{enumerate}

\FORALL {time-slots $t$} 

\STATE Every node uses the seed common to its group to select a frequency band
uniformly at random from the $M$ available bands.

\STATE {\em Gain computation by nodes:} Every node $v$ computes the {\em
gain} in total weight of outgoing links from $v$ if some outgoing link from $v$ were
activated over the randomly selected frequency band. Denoting the randomly selected
band for node $v$ as $f(v)$, this computation is as follows:

\begin{enumerate}

\item For every node $v$, if a new outgoing link from $v$ were to be activated over $f(v)$,
some link $l'$ may have to be deactivated to maintain the feasibility of the new
schedule.  This link $l'$ (with $v$ as one end) is any link that is active over band $f(v)$ in
schedule $S_{t-1}$; if no such link exists in $S_{t-1}$, but all $K$ radios of $v$
are used up in $S_{t-1}$, then $l'$ is the one with the minimum value of $w_lr_l$.
If there is such an $l'$, then $v$ computes $loss(v)=w_{l'}r_{l'}$ based on the
above; else $loss(v)=0$.

\item The net gain due to node $v$ activating a link over band $f_j$ is computed as
follows.
{\small
\begin{align}
\label{eqn:netgain}
&gain(v)= 
\max_{\{l:\text{tail}(l)=v,\text{band}(l)=f_j\}}
\left(w_{l}(t) - loss(v) \right)^+
\end{align}}
\end{enumerate}

\STATE {\em Computing maximum gain link in every group:} Every group computes the outgoing
link in that group with the maximum gain in a distributed manner. 
In Section~\ref{subsec:lmax}, we show that, this can be done with minor modifications 
in CSMA protocol in $\bO(\ln^2 \Delta)$ mini-slots where a mini-slot is the time taken to 
exchange a pair of RTS-CTS message between communicating nodes. At the end of this
step, every node knows which, if any, of its outgoing link has maximum gain. These
nodes are termed {\em high-priority} nodes.

\STATE {\em Contention resolution among high-priority nodes:} Finally, the {\em
high-priority} nodes perform CSMA like contention resolution in their selected
frequency band. All other nodes that
are not high priority, but were active in $S_{t-1}$, continue their transmission
only if they do not sense the channel to be occupied by a high-priority node
(this is inferred by listening to RTS/CTS in the contention resolution phase.)

\ENDFOR
\end{algorithmic}
}
\hrule
\end{table}
}





\vspace{-0.16in}
\subsection{Computing Maximum Gain in a Group}
\label{subsec:lmax}

In the description of {\sf MAXIMAL GAIN} Algorithm, we have assumed that, each group
can compute the maximum gain in the group in $T_{comp}=\bO(\ln^2 \Delta)$ mini-slots. Recall that,
these mini-slots belong to schedule computation phase of a slot.
We now adapt binary search for a broadcast environment to achieve this.

We first argue that, it is sufficient to prove the result by assuming that broadcast
messages from nodes in any group do not collide with that of other groups.  To see
this, suppose we assign color to each group so that any two groups that have
interfering nodes are not assigned the same color\footnote{{\small This is similar to
frequency planning in GSM networks.}}. Under isotropic propagation,
simple geometric considerations show that $\chi=\bO(1)$ colors suffice. Thus, we can
color-code the mini-slots in a round-robin manner and allow each group to perform the
steps of max-computation only in mini-slots corresponding to its color. Thus, if each
group requires $T_{comp}$ mini-slots, the overall computation would require $\chi
T_{comp}=\bO(T_{comp})$ mini-slots. Note that, the coloring is a one-time procedure that can
be done in a distributed manner at the beginning of network operation.

The following simple procedure computes the maximum-gain link in a group in
 $T_{comp}=\bO((\ln \Delta)^2)$ mini-slots.

\HRule
{\sf LOCAL MAX:} Computing maximum gain in a group
\HRule

{\small
Each group is isomorphic to a start graph. We will call the start-center as group leader
and every other node as follower. Suppose $N_v$ is the number of nodes in a group to
which node $v$ belongs. Gain of a node is given by~(\ref{eqn:netgain}).

\begin{sloppypar}
\begin{list}{}{\itemsep=0pt \parskip=0pt \parsep=0pt \topsep=0pt \leftmargin=0.18in}

\item {\em Step~1:} Initially every follower node with positive gain is a candidate winner node.
Initialize broadcast probability of every node $v$ with positive gain as
\[ p_{bc}(v) = \smfrac{1}{2N_v}\]

\item {\em Step~2:} In the next mini-slot, with probability $p_{bc}(v)$, a follower
$v$ sends an RTS message containing its gain (say the value is $g$).  
If there are no collisions, the group leader sends a CTS echoing the gain.

\item {\em Step~3:} In case of CTS, all follower nodes with equal or lower gain than $g$,
remove them from the list of candidate winners.

\item {\em Step~4:} Once every $C_1(1+\ln(N_v))$ mini-slots, every candidate node $v$
updates $p_{bc}(v)$ as $$p_{bc}(v) \leftarrow 1.5 p_{bc}\ .$$ Here $C_1$ is a
positive constant.

\item {\em Step~5:} Step~2\--4 are are repeated with the remaining candidate winners for
$C_2\ln(N_v)(1+\ln(N_v))$ mini-slots, where $C_2$ is a positive constant.

\end{list}
\end{sloppypar}
}
\HRule

\begin{lemma}
\label{LEM:LMAX}
Suppose we choose $C_1=1/\ln(9/8)$ and $C_2=C_1/\ln(1.5)$. Then,
Procedure {\sf LOCAL MAX} computes the maximum gain in a group in $T_{comp}=\bO(\ln
\Delta)^2$ mini-slots with probability at least $1/2$, where $\Delta$ is the maximum degree of the communication graphs over all bands..
\end{lemma}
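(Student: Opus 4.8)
The plan is to reduce the statement to a clean random-pivot analysis of a single group and then account for the geometric ramp-up of the broadcast probability. First I would dispose of inter-group interference exactly as the surrounding text does (coloring with $\chi=\bO(1)$ colors), so that it suffices to analyze one isolated group in which the leader hears every follower. Since the group is a star centered at a leader of degree at most $\Delta$, we have $N_v\le \Delta+1$, hence $\ln N_v=\bO(\ln\Delta)$ and the total budget $C_2\ln(N_v)(1+\ln N_v)=\bO((\ln\Delta)^2)$ mini-slots; thus the entire content is to show that the success probability is at least $1/2$ within this budget.

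Next come the structural observations. Let $v^*$ be a follower of maximum gain. By Step~3 a node is removed only when some node of \emph{equal or larger} gain broadcasts successfully, so $v^*$ is never removed and the candidate set is non-increasing and always contains $v^*$; the procedure succeeds exactly when the value of $v^*$ is echoed, i.e. when $v^*$ transmits collision-free. The key local fact is that, with $n$ current candidates each transmitting with common probability $p$, the probability of a collision-free slot is $np(1-p)^{n-1}$, which is bounded below by an absolute constant (I will use $1/9$) whenever $p$ is \emph{matched} to the pool, $p=\Theta(1/n)$; moreover, conditioned on a collision-free slot the transmitter is uniform over the current candidates. Since that transmitter removes all candidates of smaller gain, a matched success removes, in expectation, a constant fraction of the pool. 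This is the ``binary search'' mechanism: a randomized pivot that shrinks the candidate set geometrically while always preserving $v^*$.

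I would then organize the run into $R=C_2\ln(N_v)(1+\ln N_v)/\big(C_1(1+\ln N_v)\big)=(C_2/C_1)\ln N_v=\log_{1.5}N_v$ epochs of length $C_1(1+\ln N_v)$, across which $p$ climbs geometrically from $1/(2N_v)$ to $1/2$ by factors of $1.5$. The purpose of this ramp is that it keeps $p$ matched to the geometrically shrinking pool: as successes drive $n$ down by constant factors, $1/n$ grows and the $\times 1.5$ update keeps $p=\Theta(1/n)$ valid across consecutive epochs. With $C_1=1/\ln(9/8)$, an epoch of at least $C_1\ln N_v$ matched slots fails to produce a success with probability at most $(8/9)^{C_1\ln N_v}=1/N_v$, since $\ln(8/9)/\ln(9/8)=-1$. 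Associating one targeted constant-factor reduction to each of the $R=\log_{1.5}N_v$ epochs, a union bound yields total failure probability at most $R/N_v=(\log_{1.5}N_v)/N_v\le 1/2$ for all $N_v$ above a small constant (the few remaining tiny cases are handled directly); the $\times 1.5$ rate is exactly what makes $R$ epochs suffice to collapse the pool from $N_v$ down to $\{v^*\}$.

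The step I expect to be the main obstacle is making the ``$p$ stays matched to the pool throughout'' claim rigorous: the pool size is a random process coupled to the very success events we are trying to lower-bound, and the ramp of $p$ (factor $1.5$) races against the random-pivot shrinkage (factor $\approx\tfrac12$ per success). I would control this by defining deterministic target pool sizes $n_r=N_v/1.5^{\,r}$ and showing, epoch by epoch, that conditioned on entering epoch $r$ with at most $n_{r-1}$ candidates and with $p$ in the matched band $[\tfrac{1}{3}n_r^{-1},\,n_r^{-1}]$, the pool drops below $n_r$ within the epoch except with probability $\le 1/N_v$. The constant $1/9$ must be verified to hold uniformly over this band and over all relevant $n$ (including small $n$), and the union-bound bookkeeping must be arranged so that the stated constants $C_1,C_2$ deliver exactly the $1/2$ guarantee.
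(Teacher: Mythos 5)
A preliminary note: the paper itself does not prove Lemma~\ref{LEM:LMAX}; it defers to the longer technical report~\cite{longer-version12}, so there is no in-paper argument to compare yours against. Judged on its own merits, your architecture --- reduce to a single isolated group via the $\chi=\bO(1)$ coloring, note $N_v\le\Delta+1$ so the budget is $\bO((\ln\Delta)^2)$ mini-slots, view each collision-free broadcast as a uniformly random pivot that discards all lower-gain candidates while never discarding a top-gain node, and race the deterministic $\times 1.5$ ramp of $p_{bc}$ against deterministic pool targets $n_r=N_v/1.5^r$ with an epoch-by-epoch union bound --- is the natural formalization of the ``binary search in a broadcast environment'' that the procedure is described as. Several of your individual claims are correct and important: all candidates share a common broadcast probability at all times (they start equal and update on the same schedule), so the pivot is indeed uniform by symmetry; the collision-free probability $np(1-p)^{n-1}$ is bounded below by a constant whenever $p$ is matched to the pool; and overshoot (the pool falling well below its target) is harmless for the induction because the targets are one-sided.

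Two concrete gaps remain. First, your terminal event is wrong: driving the pool down to $\{v^*\}$ is \emph{not} success. Success requires the leader to echo $v^*$'s gain, i.e.\ the pool must reach $\emptyset$ via a collision-free transmission by $v^*$ itself; with targets $n_r=N_v/1.5^r$, the budget of $R=\log_{1.5}N_v$ epochs is exhausted exactly when the target reaches $1$, so as written no slots are reserved for this final, constant-probability event (fixable by using strict targets or by showing success occurs en route, but it must be argued). Second, and more substantively, your per-epoch bound $(8/9)^{C_1\ln N_v}=1/N_v$ controls only the probability that an epoch produces \emph{no} collision-free broadcast, whereas the induction needs the epoch to produce a broadcast whose pivot is high enough to cut the pool from $1.5\,n_r$ to below $n_r$; under the uniform-pivot law this happens, given a success, only with probability about $(n_r+1)/n\ge 2/3$. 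Folding that factor in weakens the per-epoch failure exponent from roughly $N_v^{-1}$ to $N_v^{-c}$ with $c<1$ for your constants, after which the union bound $R\cdot(\text{per-epoch failure})\le 1/2$ fails for moderate group sizes. So the claim that the stated $C_1,C_2$ ``deliver exactly the $1/2$ guarantee'' is precisely the step that does not close; it needs either sharper per-slot constants on the matched band (e.g.\ $np(1-p)^{n-1}\ge \frac{1}{2}e^{-3/4}$ rather than $1/9$) or a restructured count of successes across epochs, together with an honest treatment of small $N_v$.
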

\begin{proof}
See the longer version~\cite{longer-version12}.
\end{proof}

\vspace{-0.1in}
\subsection{Throughput and Delay Guarantee}

The following result states the throughput and delay guarantee of {\sf MAXIMAL-GAIN}
scheduling algorithm in terms of the graph-parameters $\kappa_1$ and $\beta_{\max}$ defined 
in Section~\ref{sec:setting}. 


\begin{theorem}
\label{THM:TPDE}
The MAXIMAL-GAIN Algorithm has time-complexity of $\bO((\ln \Delta)^2)$ times the time
required to exchange an RTS-CTS message between two neighbors. The algorithms
provides the following guarantees:
\renewcommand{\theenumii}{(\roman{enumii})}
\renewcommand{\labelenumii}{\theenumii}
\begin{sloppypar}
\begin{list}{}{\itemsep=0pt \parskip=0pt \parsep=0pt \topsep=0pt \leftmargin=0.18in}

\item[(i)] MAXIMAL-GAIN algorithm stabilizes all arrival rate vectors $\lambda$ such that
$\lambda \in \frac{\Lambda}{\mu},$
where
\changecolor{
\begin{equation}
\label{eqn:mu}
\mu=\beta_{\max}(1+2(1+\kappa_1))\ .
\end{equation}
}

\item[(ii)] If $\lambda + \epsilon \in \Lambda/\mu$ for some $\epsilon>0$, then the total queue length
achieved by the {\sf MAXIMAL-GAIN} algorithm,
$Q_{tot}(t)=\sum_h q_h(t)$ satisfies
$$ \lim \sup_{T \to \infty} \frac{1}{T} \sum_{t=0}^{T-1} E[Q_{tot}(t)]
= \frac{c_1\ MH}{\epsilon},$$
where $c_1$ is a universal constant independent of the network parameters and
radio resources.
\end{list}
\end{sloppypar}
\renewcommand{\theenumii}{(\arabic{enumii})}
\renewcommand{\labelenumii}{\theenumii}
\end{theorem}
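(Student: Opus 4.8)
The plan is to reduce both performance guarantees to a single \emph{approximate max-weight} property of the one-slot schedule and then run a standard quadratic-Lyapunov drift. The time-complexity claim needs nothing new: by Lemma~\ref{LEM:LMAX} the procedure {\sf LOCAL MAX} finds the maximum-gain link of each group in $\bO((\ln\Delta)^2)$ mini-slots, and since under isotropic propagation $\chi=\bO(1)$ colors separate mutually-interfering groups (Section~\ref{subsec:lmax}), the color-coded round-robin computation still terminates in $\chi\,T_{comp}=\bO((\ln\Delta)^2)$ RTS-CTS exchanges. The substance of the theorem is therefore in parts (i) and (ii).

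The reduction I would use is the guarantee that, for every backlog vector $\mathbf{q}(t)$,
\begin{equation}
\label{eqn:approxmw}
E\!\left[\sum_{l\in S_t} w_l r_l \,\Big|\, \mathbf{q}(t)\right]\;\ge\;\frac{1}{\mu}\,W^*(t)-C,
\qquad W^*(t)=\max_{S\in\mathcal{S}}\sum_{l\in S}w_l r_l,
\end{equation}
with $\mu=\beta_{\max}(1+2(1+\kappa_1))$ and an additive slack $C=\bO(MH)$. Granting \eqref{eqn:approxmw}, (i) and (ii) follow from the quadratic-Lyapunov argument (extending \cite{modiano_delay,taseph92}): with $L(\mathbf{q})=\sum_h q_h^2$ and the recursion $q_h(t+1)=[q_h(t)+A_h(t)-D_h(t)]^+$, bounded second moments give
\begin{equation}
\label{eqn:drift}
E[L(\mathbf{q}(t{+}1))-L(\mathbf{q}(t))\mid\mathbf{q}(t)]\;\le\;B+2\sum_h q_h(t)\big(\lambda_h-E[D_h(t)\mid\mathbf{q}(t)]\big),
\end{equation}
for a constant $B=\bO(MH)$. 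Because $w_l=q_h$ for a link on hop $h$, the service term equals $E[\sum_{l\in S_t}w_l r_l\mid\mathbf{q}(t)]$, which by \eqref{eqn:approxmw} is at least $W^*(t)/\mu-C$; and $\lambda+\epsilon\in\Lambda/\mu$ forces $W^*(t)/\mu\ge\sum_h q_h(t)(\lambda_h+\epsilon)$. Substituting turns \eqref{eqn:drift} into $E[\cdots]\le B'-2\epsilon\sum_h q_h(t)$ with $B'=B+2C=\bO(MH)$; telescoping over $t=0,\dots,T-1$ proves stability (giving (i)) and yields $\limsup_T\frac1T\sum_t E[Q_{tot}(t)]\le B'/(2\epsilon)=c_1MH/\epsilon$ (giving (ii)).

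Establishing \eqref{eqn:approxmw} is the crux, and I would split it into a \emph{static} and a \emph{dynamic} part. The static part is a locality-gap bound: fix $\mathbf{q}(t)$ and an optimal $S^*$ attaining $W^*(t)$, and argue that any schedule admitting no weight-improving move within $\text{nghbr}(\cdot)$ already has weight at least $W^*(t)/\mu$. Charging each link of $S^*$ to the group of its tail and to the at-most-$\beta_{\max}$ links that can block it, and using that at most $\kappa_1$ mutually non-interfering optimal links can meet a single star-center's $2$-hop neighborhood, is what produces the factor $\mu=\beta_{\max}(1+2(1+\kappa_1))$. The dynamic part accounts for the fact that {\sf MAXIMAL-GAIN} runs only \emph{one} randomized local-search iteration per slot on a \emph{sticky} schedule (it keeps the links of $S_{t-1}$ that survive contention): since each group activates in a band drawn uniformly from $M$, an available improving move is attempted with probability at least $1/M$, so the per-slot expected weight trails a local optimum by an additive amount scaling with the number of bands and hops, which is exactly the origin of the $\bO(MH)$ slack $C$ (and hence of the $M$ in the delay rather than in $\mu$).

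The main obstacle will be the static locality-gap bound: disentangling the interference-blocking count (the $\beta_{\max}$ factor) from the $2$-hop neighborhood accounting (the $\kappa_1$ factor) so that they combine to precisely $\beta_{\max}(1+2(1+\kappa_1))$, while simultaneously confirming that the random band selection inflates only the additive constant. The delicate point is that the star-partition of $G_1(\mV,\mE_1)$ is computed once on the lowest band, yet it must certify the approximation for the band-dependent interference graphs of \emph{every} band; defining the $2$-hop MIS cover $\kappa_1$ on $G_1$ is what lets this single partition furnish a bound uniform across all $M$ bands.
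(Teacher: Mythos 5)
Your reduction of parts (i) and (ii) to a per-slot, per-queue-state guarantee
\[
E\Bigl[\textstyle\sum_{l\in S_t} w_l r_l \,\Big|\, \mathbf{q}(t)\Bigr]\;\ge\;\frac{1}{\mu}W^*(t)-C,\qquad C=\bO(MH),
\]
is where the proposal breaks. MAXIMAL-GAIN performs a \emph{single} local-search iteration per slot starting from $S_{t-1}$, activating at most one new link per group. So conditioned on an adversarial pair $(\mathbf{q}(t),S_{t-1})$ --- e.g., $S_{t-1}=\emptyset$ and all queues equal to some huge $Q$, with groups whose nodes could support many simultaneous links across bands and radios --- the one-slot schedule captures only one link's weight per group while $W^*(t)$ contains many such weights. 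The deficit then scales \emph{multiplicatively} with $Q$ (weights are queue lengths), so no additive slack $C$ independent of the backlog can work; your ``dynamic part'' (an improving move is attempted with probability $\ge 1/M$, hence the gap to a local optimum is $\bO(MH)$) is a non sequitur, because attempting a move with probability $1/M$ bounds the expected \emph{improvement} in proportion to that move's gain, not the remaining \emph{gap} to local optimality. Your static locality-gap bound is fine in spirit, but the algorithm never reaches a local optimum within a slot, so it cannot be invoked slot-by-slot.

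The paper's proof is built precisely to avoid this obstacle, and this is flagged as its main technical content (Remark~\ref{rem:thmsig}). Instead of a conditional per-slot approximation, it establishes the contraction inequality of Lemma~\ref{lem:expdec},
\[
\mathbb{E}\bigl[W^*(t)-\mu W(t)\bigr]\;\le\;\alpha\,\mathbb{E}\bigl[W^*(t)-\mu W_{-1}(t)\bigr],\qquad \card{\alpha}<1,
\]
where $W_{-1}(t)$ is the weight at time $t$ of the \emph{previous} schedule; combining this with the bounded per-slot weight increments ($\delta_1$, $\delta_2$) and unrolling the recursion gives a bounded \emph{unconditional} expected deficit $\mathbb{E}[W^*(t)-\mu W(t)]\le \frac{\alpha}{1-\alpha}(\delta_1+\mu\delta_2)$, which is exactly the hypothesis of Lemma~\ref{lem:approxmw} (adapted from~\cite{giaprasha03}) that yields stability; part (ii) then follows along~\cite{modiano_delay}. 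Note also that the $M$-dependence of the delay bound arises there through the contraction rate ($1-\alpha \approx \theta \approx (1+\kappa_1)/M$, so the steady-state deficit scales like $M$), not through a per-slot additive slack as you posit. To repair your proposal you would have to replace the ``dynamic part'' by exactly such a recursion coupling successive schedules and exploiting slowly-varying weights --- i.e., the paper's argument.
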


The proof is relegated to the Appendix.

\vspace{-0.15in}

\changecolor{ 
\begin{remark}[On total queue length
bound] Part~(ii) of Theorem~\ref{THM:TPDE} can also be stated in terms of offered load
$\rho<1$. Suppose $\lambda \in \rho \Lambda$. Then under mild technical conditions (see
Proposition~4,~\cite{modiano_delay}) that hold for most practical topologies, one can
choose $\epsilon = c_2(1-\rho/\mu)$ for network
independent constant $c_2$, and thus leading to a total queue-length bound of
$\bO(MH/(1-\rho/\mu))$.  
\end{remark} 
}
\vspace{-0.15in}
\begin{remark}[On the proof of Theorem~\ref{THM:TPDE}]
\label{rem:thmsig}
Since we perform a single iteration of local search in each time-slot, each
subsequent iteration of local search is with a modified weight (queue-length).
The main technical contribution of the proof is in showing that our local
search based algorithm converges even though the weights (or queue-lengths) are
changing dynamically. This is unlike Glauber Dynamics based local search where
the weights must be a slowly-changing function of queue-length for convergence to
happen~\cite{alohaworks}.  
\end{remark}
\vspace{-0.20in}
\changecolor{
\begin{remark}[Refinements of Theorem~\ref{THM:TPDE}]
\label{rem:refine}
The results of Theorem~\ref{THM:TPDE} can be further refined as follows:
\begin{enumerate} 
\item Suppose the MAXIMAL-GAIN algorithm repeats the randomized algorithm for computing LOCAL-MAX 
$\ln(1/\epsilon)/\ln(2)$ times so that the maximal gain in a group is computed with probability at least
$1-\epsilon$ instead. 
Then, the achievable throughput region in Theorem~\ref{THM:TPDE} improves
to $\Lambda/\mu'$ where
\begin{equation}
\label{eqn:mup}
\mu'=\beta_{\max}(1+\smfrac{1+\kappa_1}{1-\epsilon})\ ,
\end{equation}
and time complexity of the algorithm becomes $\bO((\ln \Delta)^2\ln(1/\epsilon))$ times the time
required to exchange an RTS-CTS message between two neighbors.
\item We have stated Theorem~\ref{THM:TPDE} simply in terms of generic graph parameters.
However, we can provide improved bound in terms of one-time {\em grouping} that we
perform. Precisely, let the grouping be such that any generalized link is interfered
by links from at most $\sigma$ distinct groups other than its own group. Then, combined
with $\ln(1/\epsilon)/\ln(2)$ repetitions of randomized LOCAl-MAX, the achievable 
throughput region in Theorem~\ref{THM:TPDE} improves to $\Lambda/\mu''$ where
\begin{equation}
\label{eqn:mupp}
\mu''=\beta_{\max}(1+\smfrac{1+\sigma}{1-\epsilon})\ .
\end{equation}
Thus, if the one-time grouping can be performed efficiently so that $\sigma$ is low, then
the bounds can be much better. 
\end{enumerate}
\end{remark}
}
\vspace{-0.1in}
\changecolor{
\subsubsection{Performance guarantee for special cases} 
Using Theorem~\ref{THM:TPDE} and refinements in Remark~\ref{rem:refine}, we can show 
that he throughput guarantee can be bounded by a constant for many practical networks:
\begin{itemize}
\item {\em Example 1.} Under isotropic propagation, simple geometric considerations 
show that the 2-hop neighborhood MIS cover
is a constant independent of network parameters, \ie, $\kappa_1=\bO(1)$. Along with the
fact that the interference degree, $\beta_{max}$ is a constant under isotropic
propagation,  the throughput loss of {\sf MAXIMAL-GAIN} schedule is at most $\bO(1)$
fraction of full throughput.
\item {\em Example 2.} In a linear chain or ring based network graph, the
throughput guarantee is $1/8$ (using~(\ref{eqn:mupp})) of the maximum throughput. 
Assuming efficient grouping, same bound holds
for a network formed by combining several linear chains where the nodes from two different chains interfere
only where the chains intersect. This kind of network is representative of infrastructure
based wireless networks deployed by placing nodes along city roads.
\item {\em Example 3.} For tree graphs such that the tree junctions are at least 2-hops
away, the throughput guarantee can be shown to be within around $1/8$ of maximum
throughput if the set of group
leaders include all tree junctions. 
\end{itemize}
In Section~\ref{sec:sim}, we show that the actual throughput guarantee of {\sf
MAXIMAL-GAIN} is much better in practice for random topologies and grid. In fact, in all
our evaluations even under asynchronous setting, we found that the throughput loss of our
algorithm is never more than $20\%$.  }

\vspace{-0.1in}
\changecolor{ \subsubsection{Comparison with other algorithms} It is
instructive to compare the theoretical performance of our algorithm with two popular
algorithms: {\em
greedy maximal scheduling} (GMS)~\cite{linshrsri06,shroff} and
Q-CSMA~\cite{srikant_qcsma,alohaworks,jiawal09} across three important
measures: complexity, throughput, and delay. Our algorithm has a throughput
guarantee factor $\mu=\beta_{\max}(1+2(1+\kappa_1))$ that has two
multiplicative terms: a term $\beta_{\max}$ arising due to greedy computation
of maximal gains link within a group, and a term $(1+2(+\kappa_1))$ arising due to
contention resolution among winner links of different groups.  The second term
is essential due to {\em grouping} mechanism which is key to making our
algorithm distributed with a complexity that simply scales as $\bO((\ln
\Delta)^2)$ which is independent of the network size. On the other hand, {\em
greedy maximal scheduling} (GMS)~\cite{linshrsri06,shroff} adapted to our
setting has a throughput guarantee factor of $\beta_{\max}$ and provides
optimal throughput scaling as with our algorithm.  However, the price of GMS is
in terms of much higher computational complexity.  Indeed, even a distributed
implementation of greedy maximal schedule could easily require
$\Omega(MH\log(MH))$ mini-slots~\cite{linshrsri06} and this could be difficult
to implement even for moderate values of number of bands ($M$) and number of
hops ($H$). Thus, the $\kappa_1$ dependent second term in our performance
guarantee can be viewed as the cost of distributed implementability with a
complexity that is practically independent of network size and fully
independent of number of bands. Finally, Q-CSMA or Glauber Dynamics based
algorithm when applied to our setting would provide a throughput of 100\% and
distributed-complexity that is essentially $\bO(1)$; however, the price of such
an algorithm is that total queue length is not guaranteed to be even polynomial
in the network parameters like number of nodes, bands etc. {\em Thus, our
algorithm strikes a trade-off between distributed-complexity, throughput, and
delay guarantee by simultaneously guaranteeing bounded throughput loss, network
independent distributed complexity, and linear (in number of hops) delay
scaling. Our evaluation in Section~\ref{sec:sim} using multi-band asynchronous versions
of the algorithms show that, throughput of MAXIMAL-GAIN is typically
within 80\% of throughput of Q-CSMA and almost on par with GMS, and average delays are
considerably better than both Q-CSMA and GMS.
} }


\vspace{-0.2in}
\subsection{Extensions}
\label{sec:ext}


{\bf Extensions to account for ACI:} Due to leakage in adjacent frequency band caused
by non-ideal filters in transmitters, interference can be also caused by radios
transmitting in adjacent frequency band~\cite{ourdtvpap09}. We can account for this
by defining the interfering links $I_l$ to be a union of links causing secondary
interference and links causing ACI. Some minor modifications are required in the
local computation of gain in {\sf MAXIMAL GAIN} scheduler.  We skip the details.

{\bf Extensions to multi-hop setting:} Our algorithm can be extended to a multi-hop
setting by choosing the weights using ``queue back-pressure" mechanism developed
in~\cite{taseph92}.  Essentially, instead of choosing the weights as queue-lengths, 
they must be chosen as differential queue-length as compared to next hop.
Some minor changes are required to ensure that nodes can learn
suitable information from the neighborhood to compute the queue back-pressure based
weights.

\vspace{-0.15in}
\section{Asynchronous Maximal-Gain}
\label{sec:asynch}

We now use the design philosophy of synchronous Maximal-Gain algorithm to
develop a distributed MAC scheduler for wireless networks where nodes could be
asynchronous or where control packets could be lost. Since the asynchronous model
is difficult to analyze without further assumptions, we
will provide simulation evaluation of this asynchronous MAC in Section~\ref{sec:sim}.

Due to multi-radio multi-band setting, there are two key challenges in extending our synchronous
MAC to an asynchronous setting.  First, since there are fewer radios (per node)
than bands in a typical setting, it is difficult for nodes to keep track of
activities and availabilities of different bands.  Second, nodes in a group may 
not have up to date information on the maximum-gain link.

To alleviate the first problem, we assume that each wireless node has one extra radio
for exchange of control messages on a dedicated control band which is a standard
approach~\cite{vic-ctrl}. All nodes use the control band to send RTS/CTS messages to
reserve an intended band (the intended band is decided using our algorithm described
shortly). To alleviate the second problem, we use the design philosophy of
synchronous Maximal-Gain scheduler as follows.  Recall that in the synchronous
Maximal-Gain algorithm, in each scheduling slot, each group does the following:
{\em(i)} choose a band at random and {\em(ii)} the maximum-gain link
contends for channel access in the chosen band. This can be viewed as assigning priorities to
(generalized) links as follows: any maximum-gain link in a randomly picked band has
{\em highest priority} (HP), the links that transmitted in the previous transmission
opportunity but has no HP link in the interference neighborhood, have {\em medium
priority} (MP), all other links have {\em low priority} (LP) and can transmit only if
no HP and MP links contend. The prioritization of the links can be achieved by
allowing links of different priorities to contend in different parts of contention
window as described shortly.

With the above intuition, we now describe our asynchronous MAC scheduler. In the
following, we assume that the grouping of the nodes is done identical to that
described for the synchronous algorithm. The star-centers are called leaders and they have some 
extra functionalities. Later in this section, we describe a method to eliminate the role of a leader.\\

\HRule The {\bf Asynchronous MAXIMAL-GAIN} Algorithm \HRule
{\small
\noindent
{\bf Leader's algorithm:} The leader in each group performs two key functions:
{\em maintaining, updating, and broadcasting information} about the group members,
and {\em selecting the maximum-gain link} dynamically.

\begin{list}{}{\itemsep=0pt \parskip=0pt \parsep=0pt \topsep=0pt \leftmargin=0.12in}

\item[(1)] {\em Information update:} All nodes send RTS/CTS messages over the
control band for an intended link (the choice of the intended link is described later
in the algorithm). Any RTS/CTS message contains the following:
{\em (i)} queue-length information of the intended link,
{\em (ii)} the frequency band of the intended link,
{\em (iii)} and the data rate of the intended link.
Whenever a leader hears an RTS/CTS message from a
group member over the control band, it does the following:

\begin{itemize}

\item{\em Weight Update:} It updates the queue size of the intended link
from the information embedded in the RTS/CTS messages.

\item{\em Data rate update:} If the data rate of the intended link has changed, then
this is also updated.


\end{itemize}
Nodes also send queue length updates when change in queue length due to packet arrivals is beyond a threshold.

\item[(2)] {\em Maximum-Weight Link Selection:} Once every $T$ time units, a leader
chooses a random band and selects the maximum-gain link in its group using information of the weights and the data rates
of each link (see Eqn.~\ref{eqn:netgain}). The leader then broadcasts over the control
channel the identity of the chosen maximal-gain generalized link, and also the
identity of an existing active link that has to be deactivated (if any). This is because, gain
computation requires that one link is activated at the expense of an existing
active link.  The value of $T$ is a design parameter and is typically of the order of
a few MAC packet transmission times.

\end{list}
\noindent
{\bf Node's algorithm:} For every node, corresponding to {\em each radio} is a set of
3~lists of (generalized) links, namely {\em High Priority (HP)} list, {\em Medium
Priority (MP)} list and {\em Low Priority (LP)} list.  There are two components of
node's algorithm: addition and deletion into these lists, and {\em prioritized}
CSMA/CA based on these lists. These two are described below.

\begin{list}{}{\itemsep=0pt \parskip=0pt \parsep=0pt \topsep=0pt \leftmargin=0.12in}

\item[(1)] {\em Addition and deletion into HP, MP, LP lists:} The lists are updated as
follows for each radio of each node.
\begin{sloppypar}
$\bullet$ {\em HP:} If a node hears a broadcast (over the control band) from its leader
indicating that it is the transmitter of the maximum-gain generalized link, it adds
this link to the HP list. A link is removed from the HP list once it gets activated
for the first time.

$\bullet$ {\em MP:} Any link that gets removed from the HP list is added to the MP list
of the same radio. A link in MP list is removed only when it loses contention of the
channel over the band corresponding to the link.

$\bullet$ {\em LP:} Every other link, that is not a part of HP or MP list belongs to LP
list. It gets removed only when a link is added to any HP list using the method
described above. We ensure that a link belongs to LP list of all radios or none of
the radios.
\end{sloppypar}

\item[(2)] {\em Maximal-Gain CSMA/CA:} Whenever a radio becomes free, it performs the
following steps:
\begin{sloppypar}
\begin{list}{}{\itemsep=0pt \parskip=0pt \parsep=0pt \topsep=0pt \leftmargin=0.12in}

\item[(i)] {\em Step-1:} The radio visits the lists in order of priority (first HP,
followed by MP and then LP) and picks the first inactive link that has non-zero queue
length.

\item [(ii)]{\em Step-2:} The radio switches to the selected link's band and assesses if
the channel is free. If it is free, it transmits an RTS message over the control
channel at a slot selected uniformly between $[0,CW/2-1]$ if it is an HP-link,
between $[CW/2,3CW/4-1]$ if it is an MP-link, and between $[3CW/4, CW-1]$ if it is an
LP-link.  The RTS message contains the following information {\em(i)} the destination
and communication band, and {\em(ii)} queue size of the link.  When the destination
node receives an RTS, it responds with a CTS if resources (radio and band) are
available to it.  The CTS also echoes the information in the RTS.

\item[(iii)] {\em Step-3:} If the contention is successful, then transmission happens
for a constant time interval $TXOPT$ for HP-links, and for one packet duration for
MP and LP links. The constant duration $TXOPT$ is chosen so that multiple packets can
be transmitted (typically a few {\em ms}).

\end{list}
\end{sloppypar}
\end{list}
}
\HRule
\vspace{0.05in}


 {\bf Eliminating the role of group leaders:}  Since grouping is still
necessary, first we ensure that members of the group pick the same random band
at every decision epoch, as required by our algorithm, by using the same source
of randomness (assume small clock drifts). Same source of
randomness within a group can be achieved in the initialization phase by using
the MAC address of the neighboring node who initiates  group formation first.
Second, since the control channel is in the lowest frequency band, all nodes in
a group overhear all RTS/CTS messages thus making sure every node can locally
compute the maximum gain link in the group.

\changecolor{
{\bf Grouping periodicity:} The grouping in our algorithm implicitly assumes a static network. 
In presence of node churn, grouping can be performed {\em incrementally} as follows. New nodes that join can either
join an existing group if there is leader in the neighborhood, else, the node can start a
new group. In addition, a periodic and infrequent {\em re-grouping} can be performed to have
fewer groups.
}

{\bf Overheads:} The RTS/CTS messages in our protocol contain more information
than the standard CSMA/CA RTS/CTS messages. We quantify these overheads
for practical networks. The RTS/CTS message contains three
pieces of information {\em(i)} destination and band, {\em(ii)} queue size of
the link and {\em(iii)} identity of radios to be used at source and
destination.  The first piece of information requires 6 bytes for the
destination MAC and 1 byte for the band (we do not expect more than 256 bands).
For the second piece of information, we assume that the queue size does not
exceed 256 packets, requiring one byte. Finally the identity of the radio
requires not more than 1 byte. Therefore the total overhead is less than 9
bytes.

 {\bf Robustness:} Our protocol is robust in the sense that it works even with
slightly outdated queue size information with the leaders. 
Furthermore, it also does not require the nodes in the group to receive every broadcast 
about maximal-gain links from the leader correctly. We validate this in Section~\ref{sec:sim}.

\vspace{-0.2in}
\section{Evaluation}
\label{sec:sim}

The goals of our simulation experiments are to {\em (i)} investigate average
delay performance of our MAXIMAL-GAIN algorithm {\em (ii)} evaluate the
performance of our algorithm in terms of throughput and {\em (iii)} evaluate
the robustness of our algorithm with respect to loss of control messages in the
network.  \changecolor{We compare asynchronous MAXIMAL-GAIN algorithm against two popular
algorithms: asynchronous and multi-radio multi-band band extension to Q-CSMA
algorithm~\cite{srikant_qcsma} called MB-QSMA, and a multi-radio and multi-band
version of greedy maximal scheduling~\cite{linshrsri06,shroff} called MB-GMS.}
Roughly speaking, Q-CSMA functions as follows. A link is activated if it wins
contention in a CSMA like manner; however, a link decides to contend only if
the following two conditions are met: (a) if no other link in the communication
range of this link was active in the previous slot (b) the link gets a head
upon tossing a coin with probability of head as a suitable increasing  function
of queue length.  MB-QCSMA has two minor changes.  (a) The link contention is
conditional to availability of a free radio at the node and (b) contentions are
done on a control channel with partitions in time, for links of each band.  We
call our extension MB-QCSMA. It is easy to show that MB-QCSMA achieves 100\%
throughput, and thus, this allows us to quantify the throughput loss of asynchronous MAXIMAL-GAIN.
\changecolor{We also perform comparison with MB-GMS where greedy maximal schedule is
computed with generalized-links (instead of links in traditional GMS) every TXOPT
period of time; links that are part of greedy maximal schedule use
RTS-CTS to occupy channel for TXOPT time duration. Recall that high priority links
in asynchronous MAXIMAL-GAIN also stay active for TXOPT time duration.} 

All our algorithms are implemented over a packet-level, realistic, network simulator
called OMNET++~\cite{omnet}. We use the MIXIM framework for 802.11g like physical
layer.

\vspace{-0.1in}
\subsection{Setup}

\textbf{Topology and traffic:} We study the network performance for single-hop
flows for two types of topologies (a)  25 nodes are placed on a grid in a
2500~$m^2$ area. (b) 25 nodes are randomly deployed in the same area, with a
minimum distance of 15~m. Source-destination pairs are generated randomly. To
model burstiness of packet arrivals,  burst length is drawn from a Zipfian
distribution with parameter $1.6$, and burst inter-arrival times are drawn from
an exponential distribution.  To generate realistic interference and RSSI
values, we use standard frequency dependent ITU path loss models~\cite{itu}.

For grid deployment, we present results that are averaged over 60 runs for $5$
seconds (1000 simulation slots)  with different seeds. For random deployments,
we generate 12 random networks and generate traffic with 5 different seeds for
each network.

\textbf{Parameters:} We present the results for nodes with two and three data radios
and using 8-12 frequency bands randomly chosen from among the TV whitespace bands
(512 - 698 MHz). We choose these numbers because we observe that several cities in US
have around these many whitespaces available. Each whitespace band is 6~MHz wide.

\vspace{-0.1in}
\subsection{Results}

\begin{figure}[t]
\begin{center}
\subfigure{
\includegraphics[height=1.5in, width=1.60in]{./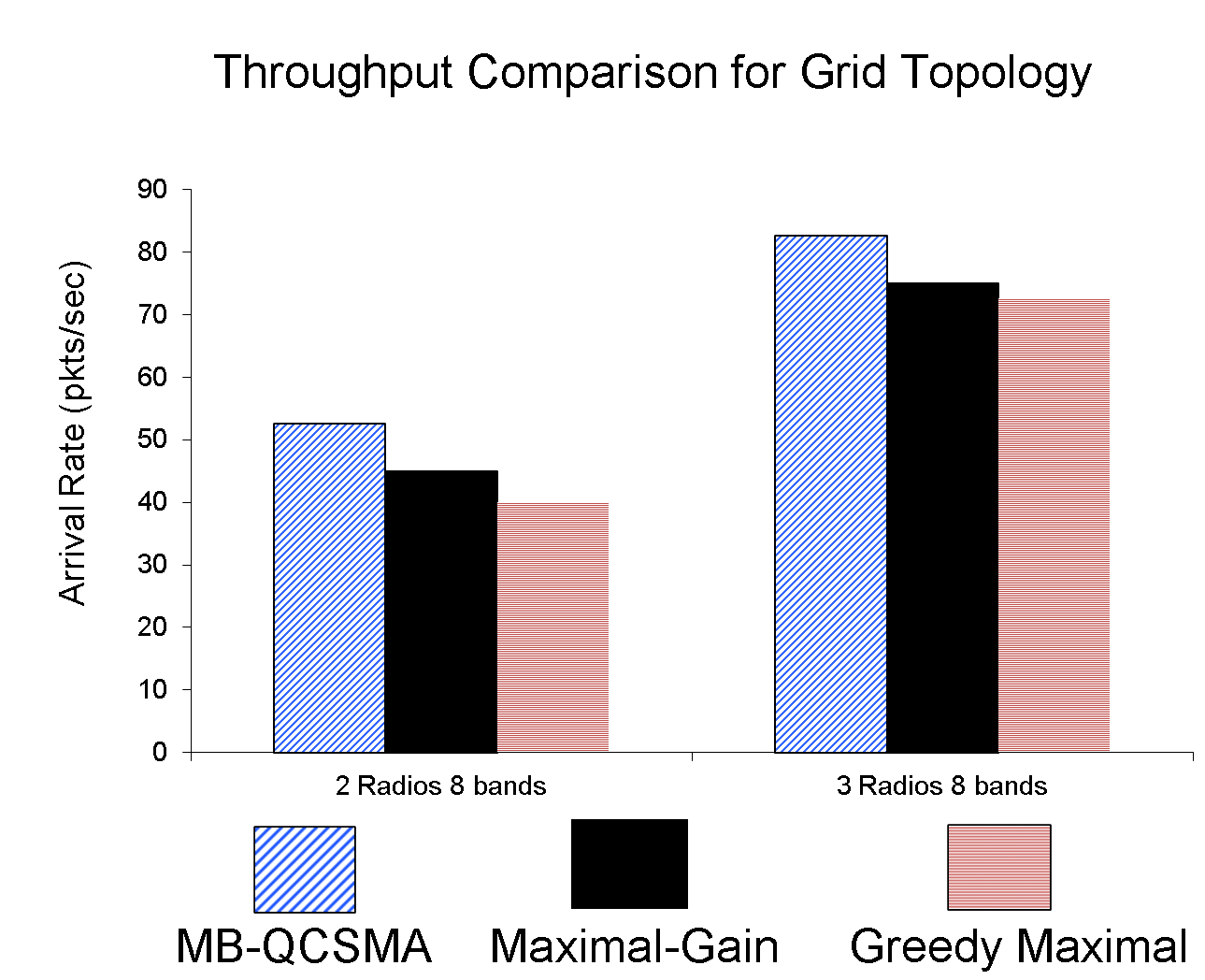}}
\subfigure{
\includegraphics[height=1.5in,width=1.60in]{./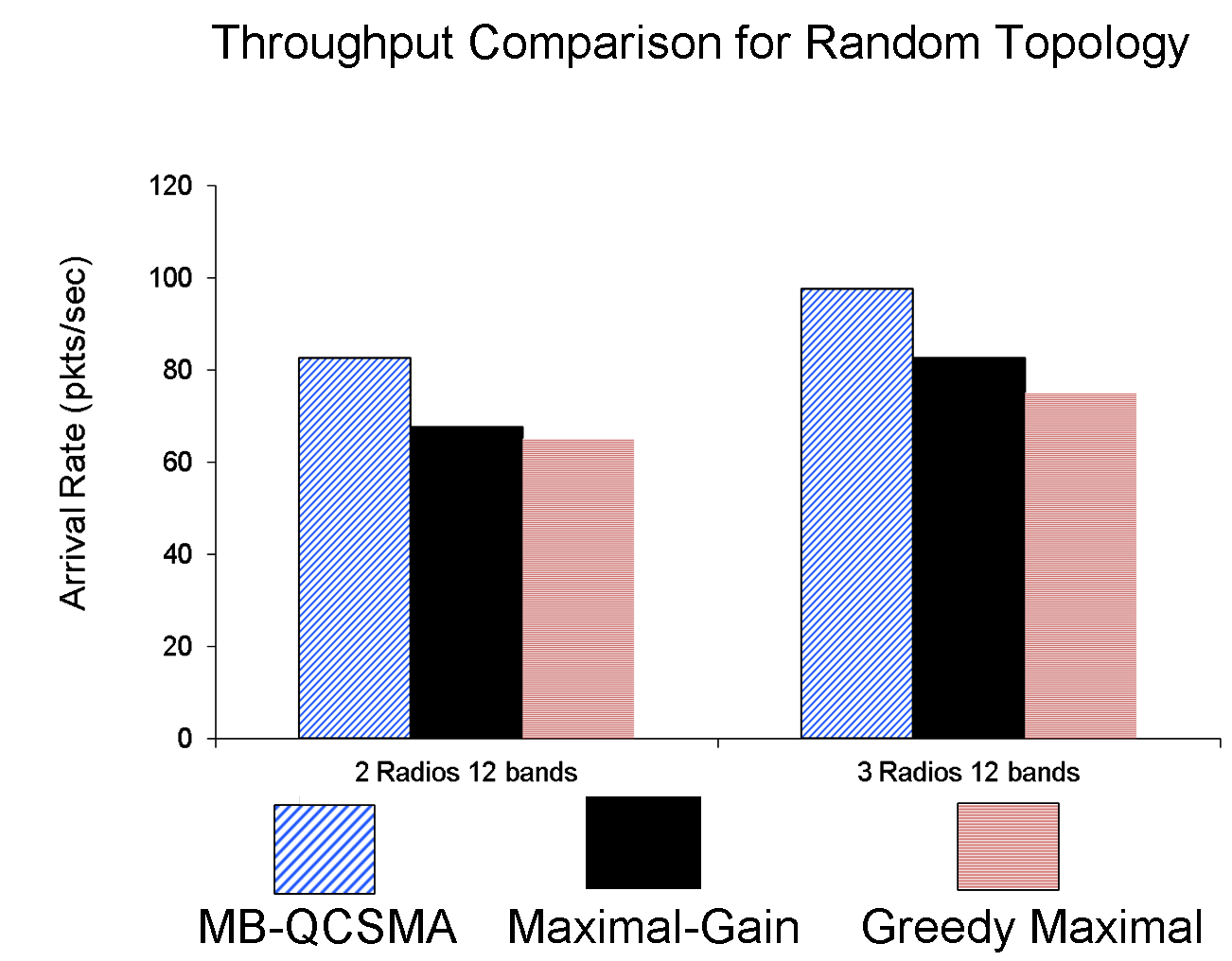}}
\caption{
\label{fig:tptcomp}
Throughput attained by Asyncronous Maximal Gain as compared to MB-QCSMA and MB-GMS. Note that,
ignoring overheads, MB-CSMA achieves 100\% throughput}
\end{center}
\end{figure}

\begin{figure*}[t]
\begin{center}
\subfigure{
\includegraphics[height=1.2in,width=1.7in]{./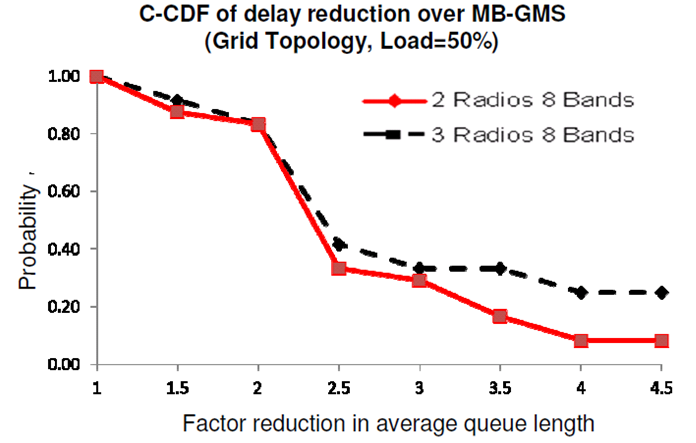}
}
\subfigure{
\includegraphics[height=1.2in,width=1.7in]{./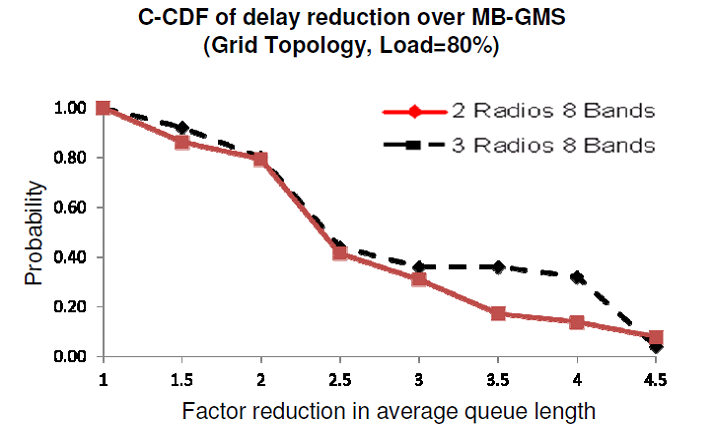}
}
\subfigure{
\includegraphics[height=1.2in,width=1.7in]{./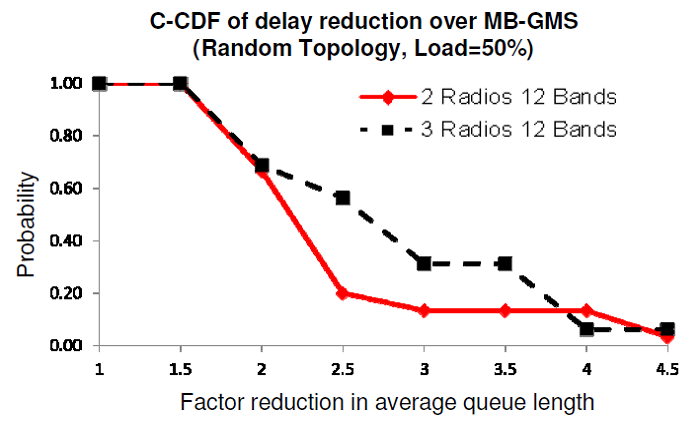}
}
\subfigure{
\includegraphics[height=1.2in,width=1.7in]{./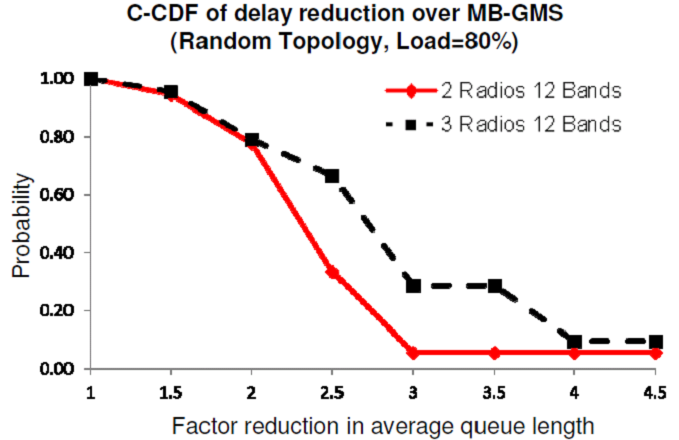}
}
\end{center}
\caption{\label{fig:ccdfgms}
\vspace{-0.2in}
C-CDF of Delay Gains of Asynchronous Maximal Gain as Compared to MB-GMS.}
\end{figure*}
\begin{figure*}[t]
\begin{center}
\subfigure{
\includegraphics[height=1.2in,width=1.70in]{./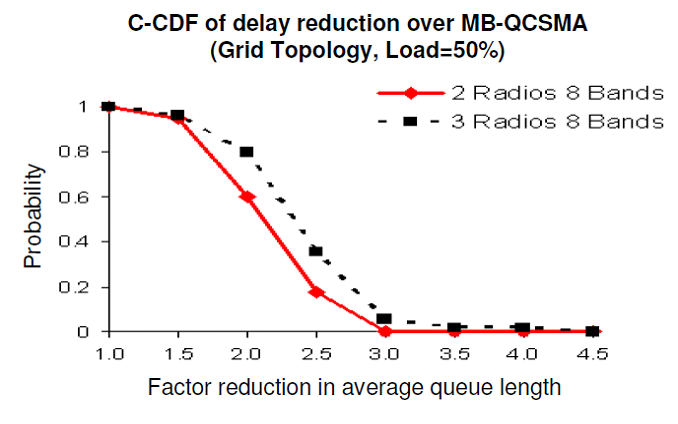}
}
\subfigure{
\includegraphics[height=1.2in,width=1.70in]{./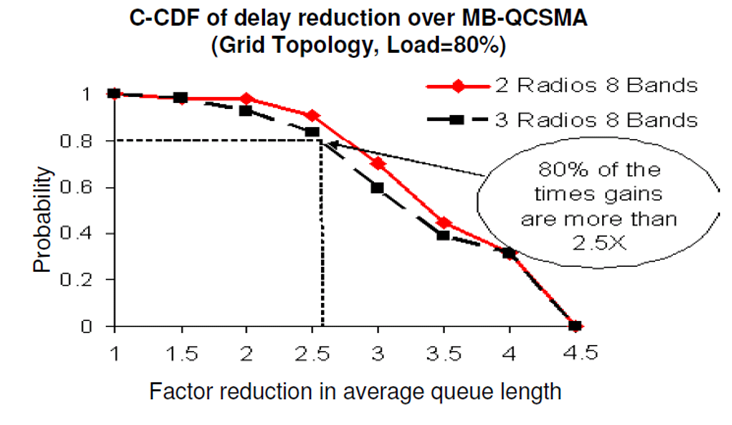}
}
\subfigure{
\includegraphics[height=1.2in,width=1.70in]{./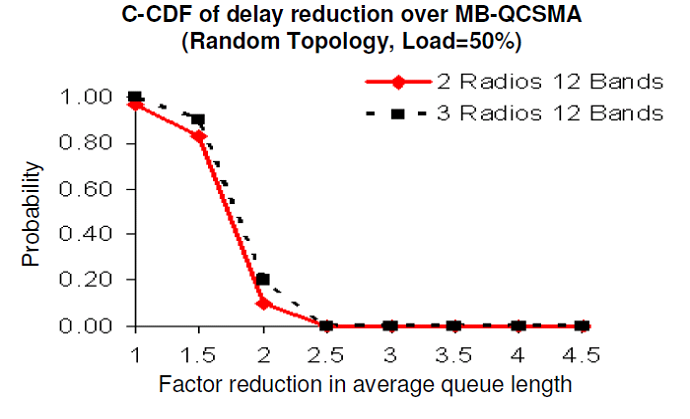}
}
\subfigure{
\includegraphics[height=1.2in,width=1.70in]{./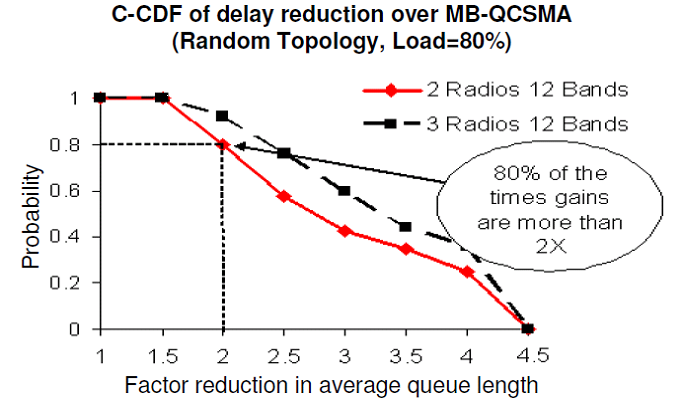}
}
\end{center}
\caption{\label{fig:ccdfqcsma}
\vspace{-0.2in}
C-CDF of Delay Gains of Asynchronous Maximal Gain as Compared to MB-QCSMA.}
\vspace{-0.1in}
\end{figure*}

\begin{figure}[t]
\centering
\subfigure
{\label{fig:q-evol} 
\includegraphics[height=1.2in, width=1.6in]{./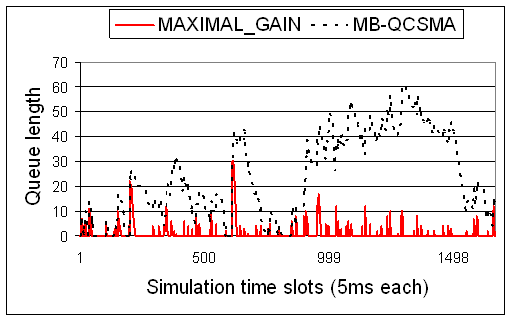}}
\subfigure
{\label{fig:q-cdf}
\includegraphics[height=1.2in, width=1.6in]{./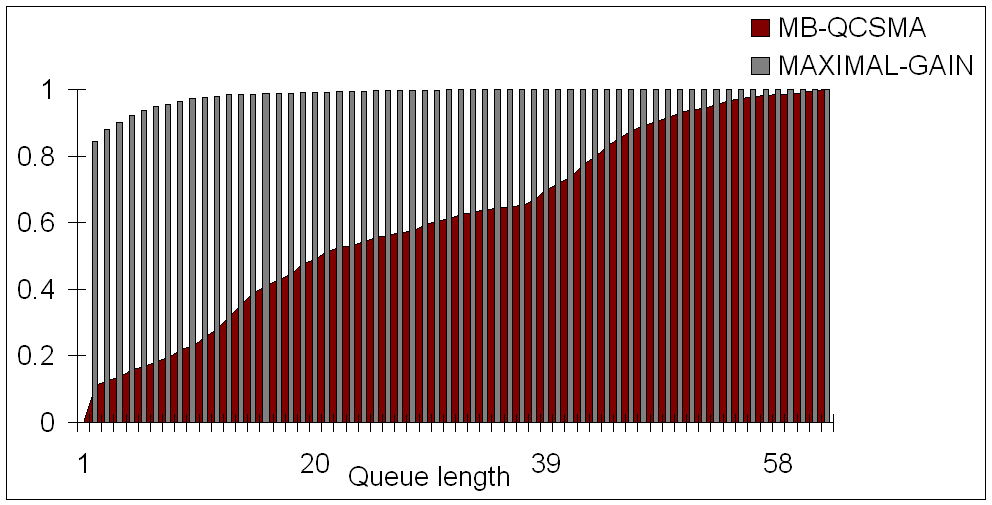}}
\vspace{-0.1in}
\caption{{\bf (a):} Queue evolution of a typical bottleneck node over 10~seconds. {\bf (b):}
CDF of the queue length values of a bottleneck node over 10 seconds. Note that the
probability of queue length being less than 25 is close to 50\% with MB-QCSMA while
it is close to 100\% in MAXIMAL-GAIN.}
\vspace{-0.2in}
\end{figure}

\textbf{Delay gains:} We compare the network-wide average delay of MAXIMAL-GAIN with
MB-GMS (Figure~\ref{fig:ccdfgms}) and MB-QCSMA (Figure~\ref{fig:ccdfqcsma}).  We show the
C-CDF of delay improvements (average queue length with MB-QCSMA / Average queue length
with MAXIMAL-GAIN) with our asynchronous algorithm for a moderate load (50\%) and high
load (80\%).  to answer the question : how often are the delay gains significant?  We show
the results for 25 node grid deployment with 8 bands and 25 node random deployment with 12
bands.  \changecolor{Compared to MB-QCSMA, as shown in Figure~\ref{fig:ccdfqcsma}, under medium load
conditions, the gains are $2\times$ ($1.75\times$) or more in 50\% of the runs, and
$1.75\times$ ($1.5\times$) or more in 80\% of the runs for grid (random) topology. Also,
compared to MB-GMS, as shown in Figure~\ref{fig:ccdfgms}, under high-load conditions the
gains are $2\times$ ($2.25\times$) or more in 50\% of the runs, and $1.75\times$
($2\times$) or more in 80\% of the runs for grid (random) topology.  Similar gains were
observed with 12~(8) bands for grid~(random) topologies.}

To understand this network-level behaviour more clearly, we observe the behavior of
queues at individual nodes in the network.  Figures~\ref{fig:q-evol} and
~\ref{fig:q-cdf} plot the queue evolution of one of the congested nodes in the
network over a period of 10~seconds. We clearly see that the MB-QCSMA algorithm
allows the queue size to grow large before giving the link preferential treatment and
draining it.  On the other hand, the MAXIMAL-GAIN algorithm drains all queues more
uniformly. In Figure~\ref{fig:q-cdf}, we plot the CDF of the queue sizes and
we observe that, with MAXIMAL-GAIN, queue length is less than 10 for 97\% of the time, 
whereas queue length is less than 10 in only 15\% of the time with MB-QCSMA. We also observed
similar trends in comparison to MB-GMS.

Our key takeaways are as follows:
\begin{itemize}
\item The delay gains are more significant (typically in the range
$2\--4\times$) under heavy load.  

\item The gains are more prominent for regular topologies like grid, pointing towards the
increased benefit of our MAC in scenarios of careful network deployment.

\item Our algorithm not only shows gains in average queue-length or delay, but also
ensures short queue-lengths on most times during network operation. Thus, our MAC
allows the designer to use a small MAC-layer buffer.

\end{itemize}

\textbf{Throughput loss:} We increase the average arrival rates for flows in the network,
till the queues at the nodes start exploding. We call the maximum average arrival rate
that the network can support as the maximum stabilizable rate of the network under a given
scheduling algorithm.  As shown in Figure~\ref{fig:tptcomp}, the maximum stabilizable rate
of MAXIMAL-GAIN algorithm is within 80-90\% of the maximum stabilizable rate of Q-CSMA
algorithm.  Also, in most instances, the throughput attained by our algorithm is on par
with or better than MB-GMS. Thus, we summarize that, {\em the throughput loss of our delay
oriented MAXIMAL-GAIN MAC is no more than $10\--20\%$}.

\textbf{Robustness:} MAXIMAL-GAIN algorithm requires the leader to overhear all
RTS/CTS communications in the group for correct operation. In practice, it is
possible that the leader does not hear all RTS/CTS messages, due to collisions and
losses and hence has some outdated information. To measure the robustness of our
algorithm to such gaps in accuracy, we vary the probability with which the RTS/CTS
messages are overheard by the leader and measure the resulting throughput and delay in the network.
We did not observe any significant drop in
either throughput or delay performance across several runs even when the probability
of not hearing messages was increased to 50\%. Thus, our algorithm is robust to 
somewhat outdated information.

\vspace{-0.15in}
\section{Conclusion}
\label{sec:conc}

In this paper we have shown that design of practical distributed algorithms for
multi-band multi-radio networks are possible that achive low delays, almost
achieve full throughput, and have low overhead and complexity.

{\scriptsize

}

\appendices

\section{Proof of Theorem~\ref{THM:TPDE}}
\label{app:thmproof}

We will start by proving Part~1 of Theorem~\ref{THM:TPDE}. Then, we will show that
the Part~2 of Theorem~\ref{THM:TPDE} follows from Part~1 along the lines of proof of
delay bound in~\cite{modiano_delay}.

{\bf Proof of Part~1 of Theorem~\ref{THM:TPDE}:}
The max-weight ($\mw$) schedule, defined by the schedule that maximizes the total weight
(queue length times the link rate), in every time-slot stabilizes the
network~\cite{taseph92}. In~\cite{giaprasha03}, the result was generalized to show
that, any schedule such that the total weight differs from the max-weight schedule at
most by an additive constant also stabilizes the network. While~\cite{giaprasha03}
showed it for switch scheduling, the technique is very general and can be adapted to
show the following generic result for our framework.

\begin{lemma}[Adapted from~\cite{giaprasha03}]
\label{lem:approxmw}
Suppose a randomized algorithm has total weight $W(t)$ at time $t$ and let $W^*(t)$
be the weight of the $\mw$ schedule.
If
\begin{equation}
\label{eqn:approxmw}
\mathbb{E}\left[W^*(t) - k W(t)\right] < C\ ,
\end{equation}
for some $C<\infty$ and $k<\infty$, then the algorithm
stabilizes any arrival $\bf{\lambda} \in \frac{\Lambda}{k}$.
\end{lemma}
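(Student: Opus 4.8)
The plan is to establish negative Lyapunov drift for the quadratic Lyapunov function $V(q)=\sum_{h\in\mh}q_h^2$, in the spirit of the standard Foster--Lyapunov argument for throughput optimality. Starting from the queue recursion $q_h(t+1)=[q_h(t)+A_h(t)-D_h(t)]^+$ and using $([x]^+)^2\le x^2$, I would first bound the one-step change as
\begin{equation}
V(q(t+1))-V(q(t))\le \sum_{h}(A_h(t)-D_h(t))^2+2\sum_h q_h(t)(A_h(t)-D_h(t)).
\end{equation}
Taking conditional expectation given the queue state $q(t)$, the first sum is a constant $B<\infty$ by the bounded-second-moment assumption on the arrivals (together with boundedness of the link rates $r_l$), while the arrival part of the second sum yields $2\sum_h q_h(t)\lambda_h$.

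The main bookkeeping step is to identify the departure term $\sum_h q_h(t)D_h(t)$ with the schedule weight $W(t)=\sum_{l\in S_t}w_l(t)r_l$. Since each served hop $h$ is drained at the rate $r_l$ of its activated link and $w_l=q_h$ by the weight definition, we have $\sum_h q_h(t)D_h(t)=W(t)$ except when a queue underflows, i.e.\ when $q_h(t)+A_h(t)<r_l$; each such hop contributes a discrepancy of at most $r_{\max}^2$, which I would absorb into a second constant $B'$. This gives $\mathbb{E}[\sum_h q_h(t)D_h(t)\mid q(t)]\ge \mathbb{E}[W(t)\mid q(t)]-B'$. I expect this underflow accounting to be the only genuinely delicate point, although it is routine; the rest follows the adaptation from~\cite{giaprasha03}.

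Next I would invoke the hypothesis~(\ref{eqn:approxmw}), read as a bound on the algorithm's internal randomization given $q(t)$ (note that $W^*(t)$ is deterministic given $q(t)$), to obtain $\mathbb{E}[W(t)\mid q(t)]\ge \tfrac{1}{k}W^*(t)-\tfrac{C}{k}$. Combining the pieces yields
\begin{equation}
\mathbb{E}[V(q(t+1))-V(q(t))\mid q(t)]\le B''+\frac{2C}{k}+2\sum_h q_h(t)\lambda_h-\frac{2}{k}W^*(t),
\end{equation}
with $B''=B+2B'$. Finally I would exploit the variational characterization $W^*(t)=\max_{S\in\mathcal{S}}\sum_h q_h(t)(S)_h$: since a linear functional attains the same maximum over a set and over its convex hull, and since $\lambda$ interior to $\Lambda/k$ means $k(\lambda+\epsilon\mathbf{1})\in\Lambda$ for some $\epsilon>0$, we get $\tfrac{1}{k}W^*(t)\ge \sum_h q_h(t)\lambda_h+\epsilon\sum_h q_h(t)$. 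Substituting leaves the drift bounded by $B''+2C/k-2\epsilon\sum_h q_h(t)$, which is strictly negative outside the finite set $\{q:\sum_h q_h\le (B''+2C/k)/(2\epsilon)\}$. The Foster--Lyapunov criterion then gives positive recurrence and a finite long-run average of $\sum_h q_h(t)$, i.e.\ stability, for every $\lambda$ in the interior of $\Lambda/k$.
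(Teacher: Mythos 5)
Your proposal is correct, and in substance it is the argument the paper relies on: the paper gives no proof of this lemma, deferring entirely to~\cite{giaprasha03}, and the technique there is exactly your quadratic-Lyapunov drift computation (bound the one-step drift of $\sum_h q_h^2$, identify the departure term with the schedule weight $W(t)$ up to underflow corrections, invoke the weight-approximation hypothesis, and lower-bound $W^*(t)$ against $k(\lambda+\epsilon\mathbf{1})$ via the convex-combination characterization of $\Lambda$). Two reconciliations are worth recording. First, you invoke the hypothesis~(\ref{eqn:approxmw}) \emph{conditionally} on $q(t)$, which matches the original formulation in~\cite{giaprasha03} (their ``Property P''), but the lemma as stated here, and its verification via Lemma~\ref{lem:expdec}, use an \emph{unconditional} expectation; with only the unconditional bound, your Foster--Lyapunov step is not justified as written, and the standard repair is to take full expectations in the drift inequality and telescope over $t$, which yields strong stability, i.e.\ $\limsup_{T\to\infty}\frac{1}{T}\sum_{t=0}^{T-1}E\bigl[\sum_h q_h(t)\bigr]<\infty$, exactly the form of stability appearing in Theorem~\ref{THM:TPDE}(ii), rather than positive recurrence. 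Second, you correctly restrict to $\lambda$ in the interior of $\Lambda/k$ so that $k(\lambda+\epsilon\mathbf{1})\in\Lambda$ for some $\epsilon>0$; the lemma's phrase ``any $\lambda\in\frac{\Lambda}{k}$'' must indeed be read this way, since boundary rates cannot be stabilized in general. Neither point is a flaw in your reasoning; they are the standard adjustments between the conditional and unconditional versions of the hypothesis, and with the telescoping variant your proof establishes precisely what the paper needs.
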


Thus, if we show that {\sf MAXIMAL-GAIN} schedule
satisfies~(\ref{eqn:approxmw}) for $k=\mu$ as defined in Theorem~\ref{THM:TPDE},
then Part~1 immediately follows. We will now argue that the condition given
by~(\ref{eqn:approxmw}) is satisfied by a scheduling algorithm, if the following inequality holds for
some \changecolor{ $\card{\alpha}<1$:}
\begin{equation}
\label{eqn:expdec}
\mathbb{E}[W^*(t) - \mu W(t)] \le \alpha \mathbb{E}[W^*(t) - \mu W_{-1}(t)]\ ,
\end{equation}
where $W_{-1}(t)$ is the total weight at time $t$ of the schedule at time $t-1$
(denoted by $S_{t-1}$). In other words, if we activate at time $t$ the same set of
links as the schedule at $t-1$, the total weight is denoted by $W_{-1}(t)$. We will
now show~(\ref{eqn:expdec})$\Rightarrow$(\ref{eqn:approxmw}).
 
\changecolor{First consider the case $0\leq \alpha < 1$.} We introduce some notations. Let $\delta_1$ be the
upper bound on the expected total weight increase that is possible from
one slot to the subsequent slot if the arrival rates are in the stable region. Similarly, let
$\delta_2$ be the bound on maximum decrease in weight possible from one slot to the
subsequent slot. Clearly, there exists $\delta_1<\infty$ because
{\small
\begin{align*}
\Ex[W^*(t)\ |\ W^*(t-1)] & \leq
W^*(t-1) +  \sum_{h\in \mh}\Ex[A_h(t)] \\
& = W^*(t-1)+\sum_h\lambda_h\\
\Rightarrow  \Ex[W^*(t)] & \leq \Ex[W^*(t-1)] + \sum_h\lambda_h
\end{align*}
}
which follows from the fact that, from $t-1$ to $t$, each weight increases by
at most the number of arrivals at $t$. Thus, we can choose $\delta_1=\sum_h
\lambda_h$.  Also, $\delta_2 < \infty$ because the number of packets that can
be served over a time slot is bounded. Now, using the facts that
$\Ex[W^*(t)]\leq \Ex[W^*(t-1)]+\delta_1$ and $W_{-1}(t)\geq W(t-1)-\delta_2$
(because  $W_{-1}(t)$ and $W(t-1)$ are the total weight of the same schedule but
in successive time-slots), we have the following if~(\ref{eqn:expdec}) holds:
{\small
\begin{align}
& \mathbb{E}[W^*(t) - \mu W(t)] \\ \nonumber
& \le  \alpha \mathbb{E}[W^*(t) - \mu W_{-1}(t)] \\ \nonumber
& \le  \alpha (\mathbb{E}[W^*(t-1)] - \mu \mathbb{E}[W(t-1)])  + 
 \alpha(\delta_1 + \mu \delta_2)\\ \nonumber
& \le  \alpha^2 (\mathbb{E}[W^*(t-2)] - \mu \mathbb{E}[W(t-2)]) +
 (\delta_1 + \mu \delta_2)(\alpha + \alpha^2) \\ \nonumber
& \le  \frac{\alpha}{1-\alpha}((\delta_1 + \mu \delta_2))\ ,
\label{eqn:diffmaxw}
\end{align}}
where, we have assumed that the system was started at time $t= -\infty$ with $W^*=0$ and $W=0$.

\changecolor{
The case when $-1<\alpha \leq 0$ follows by defining $\delta_1$ as the upper bound
on the {\em decrease} of expected total weight that is possible from
one slot to the subsequent slot, and by defining
$\delta_2$ be the bound on maximum {\em increase} in weight possible from one slot to the
subsequent slot. A similar calculation can then be performed.
}

Thus, we have argued that,~(\ref{eqn:expdec}) implies~(\ref{eqn:approxmw}) with
$k=\mu$ which in turn implies Part~1 of Theorem~\ref{THM:TPDE}.
The following Lemma asserts that Equation~\ref{eqn:expdec} holds from which the
desired result follows.

\begin{lemma}
\label{lem:expdec}
The Maximal-Gain algorithm satisfies
\begin{equation*}
\mathbb{E}\left[W^*(t) - \mu W(t)\right] \le \alpha \mathbb{E} \left[W^*(t) - \mu
W_{-1}(t)\right]\ ,
\end{equation*}
where
\changecolor{
\[\mu=\beta_{\max}(1+2(1+\kappa_1))\ ,\ 
\alpha=1-(\theta + \smfrac{\theta}{2(1+\kappa_1)})\ ,\]
and $\theta=1-(1-\smfrac{1}{M})^{1+\kappa_1}$.
}
\end{lemma}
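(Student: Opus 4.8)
The plan is to prove the stated one-step inequality by working conditionally on the queue-lengths (weights) at time $t$ and on the previous schedule $S_{t-1}$, so that $W^*(t)$ and $W_{-1}(t)$ are deterministic and the only randomness left is the per-group band selection and the contention resolution. Under this conditioning the claim becomes purely algebraic in the expected single-step gain: writing $G = W(t) - W_{-1}(t)$, the inequality $\mathbb{E}[W^* - \mu W] \le \alpha(W^* - \mu W_{-1})$ is equivalent to
\[
\mathbb{E}[G] \;\ge\; \frac{1-\alpha}{\mu}\,(W^* - \mu W_{-1}) \;=\; \frac{\theta}{2(1+\kappa_1)\,\beta_{\max}}\,(W^* - \mu W_{-1}),
\]
where I have used $1-\alpha = \theta\bigl(1+\tfrac{1}{2(1+\kappa_1)}\bigr)$ together with the definition of $\mu$. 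Thus the entire task reduces to a lower bound on the expected gain produced by one local-search step, and I would carry out the rest of the argument on that quantity.

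Next I would set up a charging argument against a fixed max-weight schedule $S^*$. For each link $l^* \in S^*$, let $v^* = \text{tail}(l^*)$, let $H_{k^*}$ be its group and $f_{j^*} = \text{band}(l^*)$. Definition~(\ref{eqn:netgain}) guarantees that if $H_{k^*}$ selects $f_{j^*}$, the max-gain link it computes has gain at least $\bigl(w_{l^*} - loss(v^*)\bigr)^+$, where $loss(v^*)$ is the weight of a link of $S_{t-1}$ that would be displaced. So $w_{l^*}$ splits into a part captured by a realized gain and a part charged to the weights of conflicting active links of $S_{t-1}$. The factor $\beta_{\max}$ is the standard greedy-maximal loss: within a single band the contention resolution activates a maximal non-interfering subset of the high-priority links, and since each link conflicts with at most $\beta_{\max}$ simultaneously-schedulable links, the realized weight dominates a $1/\beta_{\max}$ fraction of the weight that the per-group maxima would ideally contribute.

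The probabilistic heart of the argument is the band randomization, and this is where $\theta$ and $\kappa_1$ enter. For a fixed $l^*$ in band $f_{j^*}$, the links that can both interfere with it and be high-priority have tails in a bounded neighborhood; because leaders form an independent set in $G_1$ and interfering groups have leaders within a $2$-hop ball, the number of distinct groups relevant to $l^*$ is at most $1+\kappa_1$ (its own group plus at most $\kappa_1$ independent neighboring leaders, by the $2$-hop MIS-cover bound). Each such group picks $f_{j^*}$ independently with probability $1/M$, so the event that at least one of them contends in band $f_{j^*}$ — which is exactly the event needed to realize gain attributable to $l^*$ in that band — has probability at least $\theta = 1-(1-1/M)^{1+\kappa_1}$. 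Summing the charge over all $l^* \in S^*$, and noting that each group is charged at most for its own realized gain and for the displaced weights of its $\le 1+\kappa_1$ neighboring groups (with a factor $2$ covering both the gain and the loss contributions), would yield $\mathbb{E}[G] \ge \frac{\theta}{2(1+\kappa_1)\beta_{\max}}(W^* - \mu W_{-1})$, the required bound.

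The main obstacle I expect is the combinatorial bookkeeping that ties these three effects together consistently. In particular, one must verify that when a high-priority link loses the cross-group contention the links of $S_{t-1}$ it would have displaced remain active, so that no gain already charged is double-counted, and that the total charge placed on any single group's realized gain and on the displaced weights of $S_{t-1}$ never exceeds the budget $2(1+\kappa_1)\beta_{\max}$. Making the constants line up \emph{exactly} to $\mu = \beta_{\max}(1+2(1+\kappa_1))$ and $\alpha = 1-\theta\bigl(1+\tfrac{1}{2(1+\kappa_1)}\bigr)$, rather than to some looser constant, is the delicate part; by comparison the algebraic reduction of the first paragraph and the $\theta$-computation of the third are routine.
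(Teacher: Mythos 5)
Your opening reduction is correct and is exactly how the paper's proof is organized: conditioning on $\mathbf{w}(t)$ and $S_{t-1}$, the lemma is equivalent to the one-step gain bound $\Ex[W(t)-W_{-1}(t)] \geq \smfrac{\theta}{2(1+\kappa_1)\beta_{\max}}\left(W^*(t)-\mu W_{-1}(t)\right)$, and the paper indeed proceeds by splitting the gain as $W^+(t)-W^-(t)$ and bounding each piece. However, the probabilistic bookkeeping you propose attaches the constants to the wrong mechanisms, and as stated it would not close. The central error is your claim that the event ``at least one of the $\leq 1+\kappa_1$ relevant groups selects band $f_{j^*}$'' (probability $\geq\theta$) is ``exactly the event needed to realize gain attributable to $l^*$.'' It is not: if an interfering group selects $f_{j^*}$ while $l^*$'s own group does not, the gain realized in that band can be zero even though $w_{l^*}r_{l^*}$ is huge. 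The event actually needed is that $l^*$'s \emph{own} group selects $f_{j^*}$ \emph{and} its winner survives cross-group contention; the paper lower-bounds this probability by $p_{k,j}=\frac{1}{M}\Ex\left[\frac{1}{1+X_{k,j}}\right]$ with $X_{k,j}$ stochastically dominated by $\textsc{Bin}(\kappa_1,1/M)$, which evaluates to $\theta/(1+\kappa_1)$. Your $\theta$ and your $1/(1+\kappa_1)$ multiply to the right number only through this binomial/symmetry identity, not through the union-over-groups argument you give. (Where $\theta$ legitimately appears as a union bound is on the loss side, which you leave implicit: $\Ex[W^-(t)\,|\,S_{t-1},\mathbf{w}(t)]\leq\theta\, W_{-1}(t)$, since a previously active link is deactivated only if some winner appears in its interference neighborhood.)

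Two further misattributions would break the constants even if the above were repaired. First, $\beta_{\max}$ does not come from cross-group greedy-maximal contention --- that loss is already absorbed into $\Ex[1/(1+X_{k,j})]$ via uniform tie-breaking; it comes from \emph{within} a group: the optimal schedule can place up to $\beta_{\max}$ links on band $j$ with tails in a single group, while the group activates only one max-gain link, so the paper needs $\max_{l\in L_k(j,t-1)\cap L^*(j,t)} w_l(t)r_l \geq \frac{1}{\beta_{\max}}\sum_{l\in L_k(j,t-1)\cap L^*(j,t)} w_l(t)r_l$. Your scheme charges each $l^*\in S^*$ in full to its group's single max-gain link, which overcharges whenever a group holds several optimal links on the selected band, and nothing in your plan repairs this. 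Second, the factor $2$ in $\alpha$ is not ``covering both the gain and the loss contributions''; it is the success probability of the distributed max computation: by Lemma~\ref{LEM:LMAX}, {\sf LOCAL MAX} finds the group's maximum gain only with probability $\geq 1/2$, i.e., $\Pr\{I^{\max}_{k,j}=1\}\geq 1/2$ multiplies the gain term. Your plan never accounts for the event that this computation fails, so even with the other issues fixed your bookkeeping could not reproduce $\mu=\beta_{\max}(1+2(1+\kappa_1))$ and $\alpha=1-(\theta+\smfrac{\theta}{2(1+\kappa_1)})$.
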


\vspace{-0.15in}
\changecolor{
\begin{remark}
Note that $\card{\alpha}<1$ because $0<\theta + {\theta}/{(2(1+\kappa_1))}<2$. This is
true for any $M$ and $\kappa_1$.
\end{remark}
}
\begin{proof}

Let $S_t$ be the set of activated/scheduled links in time slot $t$ and let $S_t(f_j)$
be the set of those links that operate over frequency band $f_j$.  We will index by
$k$ the groups formed by the grouping in {\sf MAXIMAL-GAIN} scheduling.  Let
\changecolor{$L_k(j,t)\subseteq S_t(f_j)$} be the set of all 
generalized links on band $f_j$ \changecolor{activated at time $t$} such that the node
tail($l$) belongs to group $k$.

First, we lower bound $W(t) - W_{-1}(t)$.  To do this, we
consider two kinds of links: {\em winner links} that are egress from a node
\changecolor{based on the outcome of Algorithm~LOCAL-MAX described in
Section~\ref{subsec:lmax}}, and every other links that were also active in the
previous time slot but who lose their weight as new winner links get activated.
Clearly, the winner nodes that become a part of the schedule deactivate some links in
their interference neighborhood.  Let $W^+(t)$ be the random variable denoting the
total gain in weight of winner links at time $t$ compared to their total weight if
$S_{t-1}$ was also used at time $t$; and let $W^-(t)$ the random variable denoting
the loss in weight of deactivated links compared to their total weight if $S_{t-1}$
was also used at time $t$.
\begin{align}
&W(t) - W_{-1}(t)=W^+(t)- W^-(t)
\label{eqn:gainloss}
\end{align}
In the following, we will first lower bound \changecolor{$\Ex[W^+(t)|S_t,\mathbf{w}(t)]$ and then upper bound
$\Ex[W^-(t)|S_t,\mathbf{w}(t)]$}.
We will now introduce some notations.
Recall that in the {\sf MAXIMAL-GAIN} algorithm, each group leader selects a band at random
and picks the links with the maximum gain to transmit. This link then does CSMA
contention to transmit on the chosen band. It may well be that the chosen link is
unable to transmit since some other link in an interfering group wins the contention
on that band.  Define the indicator variable $I_{k,j}$ as
{\small
\[I_{k,j}=
\left\{
\begin{array}{ll}
1 & \text{if group $k$ chooses band $f_j$ at random}\\
  & \text{and winner captures channel}\\
  & \\
0 & \text{else}
\end{array}
\right.\]
We also define
\[I^{\max}_{k,j}=
\left\{
\begin{array}{ll}
1 & \text{if group $k$ succeeds in computing}\\
  & \text{ the max gain in the group}\\
  & \\
0 & \text{else}
\end{array}
\right.\]}
Now define $d_l(t)$ and $e_l(t)$ as the total weight decrease due to deactivation of other active links
(that were active in $S_{t-1}$) at tail($l$) and head($l$), respectively.
Now note that, for all possible sample paths, we can bound the random variable
$W^+(t)$ as
{\small \begin{align}
& W^+(t)
\geq \sum_{k,j} I_{k,j} I^{\max}_{k,j}\max_{l\in L_k(j,t\changecolor{-1})}
( ([w_l(t) r_l  - d_l(t)])^+)
\label{eqn:gain}
\end{align}}
because the term $\max_{l\in L_k(\changecolor{f_j,t-1})}[w_l(t) r_l  - d_l(t)]^+$ represents the
maximum gain in group $k$ at time $t$ (see \changecolor{eqn.~(\ref{eqn:netgain}) in
Step-4(2) of Algorithm~\ref{algo:synchmg}}).
We also have,
{\small \begin{eqnarray*}
& & \max_{l\in L_k(j,t\changecolor{-1})}[w_l(t) r_l - d_l(t)]^+ \\
&\geq&  \max_{l \in L_k(j,t\changecolor{-1}) \cap L^*(j,t)} [w_l(t) r_l - d_l(t)]^+\\
&\geq& \max_{l \in L_k(j,t\changecolor{-1}) \cap L^*(j)} [w_l(t) r_l] -
\sum_{l\in L_k(j,t\changecolor{-1})}d_l(t)
\end{eqnarray*}}
Here \changecolor{$L^*(j,t)$ is the set links of which are allocated band $j$ in the optimal
allocation at time $t$}. The first step follows from the fact constraining the allocation over the
set $L_k(j,t) \cap L^*(j,t)$ will only decrease the value of the expression, and the second
step follows from the fact that maximum of values of elements is less than the sum of their
values. Note that  we have,
{\small
$$\changecolor{\max_{l \in L_k(j,t-1) \cap L^*(j,t)} [w_l(t) r_l] \geq
\frac{1}{\beta_{\max}} \sum_{l \in L_k(j,t-1) \cap L^*({j},t)}{w_l(t)r_l}},$$
}
where $\beta_{\max}$ is the
interference degree of the network. This is because $\max$ is greater than the
average and there can be at most $\beta_{\max}$ links active on a band in a group.
We thus have
{\small
\begin{align}
& \max_{l\in L_k(j,t\changecolor{-1})}[w_l(t) r_l - d_l(t)]^+ \nonumber \\
& \geq \left( \sum_{l \in L_k(j,t\changecolor{-1}) \cap L^*(j,t)} \frac{\changecolor{w_l(t)r_l}}{\beta_{\max}}
- \sum_{l\in L_k(j,t\changecolor{-1})}d_l(t) \right)^+
\label{eqn:gain1}
\end{align}}
Substituting~(\ref{eqn:gain1}) in~(\ref{eqn:gain}) followed by taking expectations
in~(\ref{eqn:gain}), we obtain
{\small\begin{align}
& \Ex\left[W^+(t)|S_{t-1}, \mathbf{w}(t)\right]
\label{eqn:gain2}\\
& \geq \sum_{k,j} \smfrac{1}{2}p_{k,j}\left(
\sum_{l \in L_k(j,t\changecolor{-1}) \cap L^*\changecolor{(j,t)}} \frac{\changecolor{w_l(t)\changecolor{r_l}}}{\beta_{\max}}
- \sum_{l\in L_k(j,t)}d_l(t) \right)^+\ ,
\nonumber\\
\end{align}}
where $\Pr\{I_{k,j}=1\}=p_{k,j}$ and we have also used the fact that
$\Pr\{I^{\max}_{k,j}=1\} \geq 1/2$ (from Lemma~\ref{LEM:LMAX}).
Denoting by $X_{k,j}$ the random variable for the number of winner links of other groups
on band $j$ that interfere with group $k$'s winner at time $t$ (these links are also picked as winners by their
respective leaders over the same band as $k$). Assuming that the contention resolution is perfect
in the sense that (i) with probability one, at least one of the contending nodes grab channel access, and
(ii) all contending nodes are equally likely to get channel access,
we can now bound $p_{k,j}$ as follows:
\changecolor{
\begin{align*}
& p_{k,j}=\Pr(\text{leader chooses band $f_j$}) \\
&\ \ \times
\Pr\left(
\begin{array}{c}
\text{winner link in group}\\
\text{wins contention}
\end{array}
\ |\ \text{band $f_j$ is chosen}\right)\\
&=\frac{1}{M}\Ex \left[\frac{1}{1+X_{k,j}}\right]
\end{align*}
Since the number of interfering links of winner link in a group is upper bounded by
$\kappa_1$, $X_{k,j}$ can shown to be stochastically dominated from above by the
distribution $\textsc{Bin}(\kappa_1,1/M)$ (i.e., $\Pr(X_{k,j}\geq n)\leq
\Pr(\textsc{Bin}(\kappa_1,1/M)\geq n)\ \forall n\geq 0$). Thus it follows that, 
\begin{align}
p_{k,j} & \geq \frac{1}{M} \Ex\left[ \frac{1}{1+ \textsc{Bin}(\kappa_1,\smfrac{1}{M})}\right]
\label{eqn:pkj} \\
& = \frac{1}{M}\sum_{n=0}^{\kappa_1}\smfrac{1}{1+n}{\kappa_1 \choose n}
\left( \smfrac{1}{M} \right)^n \left(1- \smfrac{1}{M}\right)^{\kappa_1 - n}
=\smfrac{\theta }{1+\kappa_1}\ , 
\nonumber
\end{align}
where $\theta = 1 - (1-1/M)^{\kappa_1+1}$ and the last equality follows
from standard computations with Binomial distribution.
}
Substituting~(\ref{eqn:pkj}) into~(\ref{eqn:gain2}) we obtain the 
following.
\begin{align}
& \Ex\left[W^+(t)|S_{t-1}, {\mathbf w}(t)\right]
\nonumber\\
& \geq \frac{\theta}{2(1+\kappa_1)}\sum_{k,j} \left(
\changecolor{\sum_{l \in L_k(j,t) \cap L^*(j,t-1)} \smfrac{w_l(t){r_l}}{\beta_{\max}}}\right.
\nonumber \\
& - \left. \sum_{l\in L_k(j,t\changecolor{-1})}d_l(t) \right)
 \geq \frac{\theta}{2(1+\kappa_1)}
\left(\smfrac{W^*\changecolor{(t)}}{\beta_{\max}} \changecolor{-} W_{-1}(t)\right)
\label{eqn:gain3}
\end{align}

We will now derive an upper bound on $W^-(t)$. Note that,

\changecolor{
\begin{align}
&{\Ex}\left[W^-(t)\ |\ S_{t-1}, \mathbf{w}(t)\right]
\nonumber \\
&=\sum_{l\in S_{t-1}} w_l(t) r_l. 
\Pr\left(\begin{array}{c}
l\ \text{is in the interference neighborhood}\\
\text{of a winner link}
\end{array}\right) 
\nonumber \\
& \leq \sum_{l\in S_{t-1}}w_l(t) r_l (1-(1-\smfrac{1}{M})^{1+\kappa_1}) = \theta W_{-1}(t)
\label{eqn:loss2}
\end{align}}

\changecolor{
Substituting,~(\ref{eqn:gain3}) and~(\ref{eqn:loss2}) into~(\ref{eqn:gainloss}), we
finally get
{\small \begin{equation}
\Ex[W(t)-W_{-1}(t)]
\geq W^*(t)\frac{\theta}{2\beta_{\max}(1+\kappa_1)} - 
(\theta + \smfrac{\theta}{2(1+\kappa_1)}) W_{-1}(t)
\end{equation}}
which upon further rearrangement of term yields
{\small \begin{align*}
\mathbb{E} \left[W^*(t) - \mu W(t)\right]
\leq \alpha \mathbb{E} \left[W^*(t) - \mu W_{-1}(t)\right]
\end{align*}}
where
\[\mu=\beta_{\max}(1+2(1+\kappa_1))\ ,\ 
\alpha=1-(\theta + \smfrac{\theta}{2(1+\kappa_1)})\ ,\]
where $\theta=1-(1-\smfrac{1}{M})^{1+\kappa_1}$.
}
\end{proof}

As argued before, Lemma~\ref{lem:expdec} implies~Lemma~\ref{lem:approxmw} with
$k=\mu$ which in turn implies Part~1 of Theorem~\ref{THM:TPDE}.

{\bf Proof of Part~2 of Theorem~\ref{THM:TPDE}:} This part follows along the
lines of the delay analysis in ~\cite{modiano_delay} with some minor
modifications. See~\cite{longer-version12} for a proof sketch.
\hfill $\blacksquare$

\end{document}